\numberwithin{equation}{section}
\newcommand{\R}{{\mathbb R}}
\newcommand{\be}{\begin{eqnarray}}
\newcommand{\ben}{\begin{eqnarray*}}
\newcommand{\en}{\end{eqnarray}}
\newcommand{\enn}{\end{eqnarray*}}
\newcommand{\pa}{\partial}
\newtheorem{theorem}{Theorem}[section]
\newtheorem{lemma}[theorem]{Lemma}
\newtheorem{remark}[theorem]{Remark}
\definecolor{rot}{rgb}{1.000,0.000,0.000}
\begin{document}
\renewcommand{\theequation}{\arabic{section}.\arabic{equation}}
\begin{titlepage}
  \title{An accurate hyper-singular boundary integral equation method for dynamic poroelasticity in two dimensions}

\author{Lu Zhang\thanks{School of Mathematical Sciences, University of Electronic Science and Technology of China, Chengdu, Sichuan 611731, China. Email: {\tt zl129354@163.com}}\;,
Liwei Xu\thanks{School of Mathematical Sciences, University of Electronic Science and Technology of China, Chengdu, Sichuan 611731, China. Email: {\tt xul@uestc.edu.cn}}\;,
Tao Yin\thanks{Department of Computing \& Mathematical Sciences, California Institute of Technology, 1200 East California Blvd., CA 91125, United States. Email:{\tt taoyin89@caltech.edu}}}
\end{titlepage}
\maketitle

\begin{abstract}
This paper is concerned with the boundary integral equation method for solving the exterior Neumann boundary value problem of dynamic poroelasticity in two dimensions. The main contribution of this work consists of two aspescts: the proposal of a novel regularized boundary integral equation, and  the presentation of new regularized formulations of  the strongly-singular and hyper-singular boundary integral operators. Firstly, turning to the spectral properties of the double-layer operator and the corresponding Calder\'{o}n relation of the poroelasticity, we propose the novel low-GMRES-iteration integral equation whose eigenvalues are bounded away from zero and infinity. Secondly, with the help of the G\"{u}nter derivatives, we reformulate the strongly-singular and hyper-singular integral operators into combinations of the weakly-singular operators and the tangential derivatives.  The accuracy and efficiency of the proposed methodology are demonstrated through  several numerical examples.

{\bf Keywords:} Poroelasticity, hyper-singular operator, Calder\'{o}n relation, regularized integral equation
\end{abstract}

\section{Introduction}
\label{sec1}
We investigate the application of the boundary integral equation method (BIEM) to solve the dynamic poroelastic scattering problem~\cite{CD95,MB89,MS12,S011,S012,SSU09} in an unbounded exterior domain, and this problem is of great importance in many fields of applications such as oil and gas exploration, materials science, seismic analysis, etc. The poroelastic problem can be characterized by the Biot model~\cite{B41,B55,B561,B562,B563,B00,CBB91}, and the Neumann boundary condition will be considered in this work. Compared with the volumetric discretization methods ~\cite{DR93,DE96,LS98,XOX19}, the BIEM possesses such advantages as requiring discretization of domains of lower dimensionality and taking into account the radiation condition at infinity in a direct manner, and it has been widely studied for the numerical solutions of scattering problems~\cite{BXY17,BXY191,BXY192,BET12,BY20,CKM00,CN02,CBB91,YHX17}.

Regarding to the time-harmonic elastic wave scattering problems in an unbounded exterior domain, a combination form~\cite{BXY17,BY20,CK98} of single-layer and double-layer potentials is usually used to represent the solution, and the resulting combined boundary integral equation (CBIE) potentially ensures the validity of unique solvability corresponding to all frequencies. Although the unique solvability of the CBIE for the poroelastic scattering problem  remains unsolved, the CBIE  still provides an efficient numerical tool for the  solution of the problem imposed on the unbounded domain. Since the integral operators resulting from the action of the traction operator on both the single-layer and the double-layer potentials contain strongly-singular and hyper-singular kernels, they are only well defined in the sense of Cauchy principle value and Hadamard finite part~\cite{HW08}, respectively. Meanwhile, the appearance of the strongly-singular and hyper-singular integral operators in the CBIE leads to some difficulties related to the spectral character and the accurate evaluation of these operators. Firstly, it is known that the eigenvalues of the hyper-singular operators accumulate at infinity. Therefore, solving the CBIE by means of Krylov-subspace iterative solvers,  for instance the GMRES, often requires a relatively large number of iterations for the convergence of numerical solution. Secondly, the evaluation of the associated Cauchy principle value (resp. Hadamard finite part) of the strongly-singular (resp. hyper-singular) integrals has also been remaining a significant challenge.

To reduce the number of GMRES iterations required in the process of solving the CBIE, an efficient methodology which was originally proposed in~\cite{BET12,CN02} for acoustic and electromagnetic scattering problems utilizes the Calder\'on relation together with a regularized operator with a form similar to a single-layer operator. The derived regularized boundary integral equations (RBIEs) are of the second-kind Fredholm type. This approach has been extended to the homogeneous elastic cases~\cite{BXY192,BY20}, and it can be shown that the eigenvalues of the RBIE are bounded away from zero and infinity. However, the poroelastic Calder\'on formulas have not been studied in open literatures, and the main difficulty comes from the fact that the poroelastic double-layer integral operator (which plays important roles in the Calder\'on relations) is not compact. On the basis of the spectral property studied in~\cite{AJKKY18} for the static-elastic double-layer operator, it can be proved (see Theorem~\ref{spectra}) that the poroelastic double-layer integral operator is polynomially compact, and then the composition of the single-layer and hyper-singular integral operators can be expressed as the sum of a multiple of the identity operator and a compact operator. Relying on the theoretical results, we propose a new RBIE method  for solving the dynamic poroelastic problem,  and verify  numerically that the eigenvalues of the RBIE accumulate at fixed points only depending on the Lam\'e parameters of elastic media.

In this work, the classical Nystr\"om method, which has been widely used for the acoustic and elastic problems~\cite{CK98,CKM00,K95}, is employed for the numerical implementation of the proposed RBIE.   As applying the method, we encounter the challenge of accurate evaluation of the strongly-singular and hyper-singular integrals. In light of the novel regularized formulations presented in~\cite{BXY191,YHX17} for the elastic and thermoelastic problems, it can be shown (see Lemmas~\ref{regKP}-\ref{regN4}) that the strongly-singular and hyper-singular integrals can be re-expressed as compositions of weakly-singular integrals and tangential-derivative operators  by means of the G\"unter derivative and integration-by-parts. As a result, the Nystr\"om method allows us to evaluate the weakly-singular integrals with spectral accuracy for analytic surfaces, and to calculate the tangential-derivative of a given function via fast Fourier transform (FFT) in GMRES iterations. Numerical tests show that the proposed scheme demonstrate a simpler and more efficient performance than some alternative numerical treatments~\cite{K95,CKM00}.

The remainder of this paper is organized as follows. Section~\ref{sec:2} describes the dynamic poroelastic problem (Section~\ref{sec:2.1}) and the classical combined field integral equation (Section~\ref{sec:2.2}). Section~\ref{sec:3.1} theoretically and numerically studies the spectral properties of the poroelastic integral operators and the corresponding Calder\'on relation, and then a new regularized integral equation is proposed in Section~\ref{sec:3.2}. Exact reformulations of the strongly-singular and hyper-singular operators are presented in Section~\ref{sec:3.3}. The Nystr\"om method for numerical evaluation of the integral operators is briefly described in Section~\ref{sec:4}. Section~\ref{sec:5} presents the numerical examples to demonstrate the high-accuracy and efficiency of the proposed method. Finally, we present a  conclusion in Section~\ref{sec:6}.

\section{Preliminaries}
\label{sec:2}

\subsection{Poroelastic problem}
\label{sec:2.1}
Let $\Omega  \subset \R^2$ be a bounded domain with smooth boundary $\Gamma : = \partial \Omega $. Assume that the exterior domain ${\Omega ^c} = \R^2\backslash \overline \Omega   \subset \R^2$ is occupied by a linear isotropic poroelastic medium. Following the Biot's theory~\cite{B41,B561,B562} to model wave propagation in poroelastic medium, the dynamic poroelastic problem in frequency-domain to be considered in this work is characterized by the governed equations of the solid displacements $u=(u_1,u_2)^\top$ and the pore pressure $p$ that are given by
\begin{equation}
\label{model}
\begin{split}
& \Delta^*u + (\rho-\beta\rho _f)\omega ^2u - (\alpha-\beta)\nabla p = 0\\
&\Delta p + qp + i\omega\gamma\nabla\cdot u = 0
\end{split}\quad \mbox{in}\quad\Omega^c,
\end{equation}
or in an operator notation
\ben
LU=0, \quad L=\begin{bmatrix}
\Delta^*  + (\rho-\beta\rho _f)\omega ^2I   & -(\alpha -\beta )\nabla \\
i\omega\gamma\nabla \cdot & \Delta  + q
\end{bmatrix}, \quad U=(u_1,u_2,p)^\top,
\enn
where
\ben
\beta=\frac{{\omega {\phi^2}{\rho_f}\kappa}}{{i{\phi ^2} + \omega \kappa ({\rho _a} + \phi {\rho _f})}},\quad q = \frac{\omega^2\phi^2\rho_f}{\beta R},\quad
\gamma =  - \frac{{i\omega {\rho _f}(\alpha  - \beta )}}{\beta }.
\enn
Here, $\omega$ denotes the frequency, $I$ is the identity operator and $\Delta^*$ is the Lam\'e operator defined by
\ben
\Delta^*: = \mu\Delta + (\lambda  + \mu )\nabla\nabla\cdot
\enn
with $\Delta$ being the Laplacian operator, and  $\nabla$ being the gradient operator. The material parameters used in (\ref{model}) are listed in Table~\ref{Tablemodel}. In addition, we consider the Neumann boundary condition on $\Gamma$ given by
\be
\label{boundary condtion}
\widetilde{T}(\partial ,\nu )U: = \begin{bmatrix}
	T(\partial ,\nu ) & - \alpha\nu \\
	- i\omega\beta\nu^\top & \frac{i\beta}{\omega\rho_f}\partial_\nu
	\end{bmatrix}U = F,
\en
in which the traction operator $T(\partial ,\nu )$ is defined as
\ben
T(\partial ,\nu )u: = 2\mu{\partial _\nu }u + \lambda \nu\nabla\cdot u + \mu\nu^\perp  (\pa_2u_1-\pa_1u_2),\quad \nu^\perp=(-\nu_2,\nu_1)^\top,
\enn
where $\nu=(\nu_1,\nu_2)^\top$ denotes the outward unit normal to the boundary $\Gamma$ and $\pa_\nu:=\nu\cdot\nabla$ is the normal derivative. If the scattered field is induced by an incident field $U^{inc}$, the boundary data is determined as $F=-\widetilde{T}(\partial ,\nu )U^{inc}$.

\begin{table}[htbp]
\caption{The material parameters in poroelasticity.}
\centering
\begin{tabular}{|c|c|}
\hline
Notation & Physical meaning \\
\hline
$\lambda,\mu (\mu>0,\lambda+\mu>0)$ & Lam\'e parameters \\
$\nu_p$ & Poisson ratio \\
$\nu_u$ & undrained Poisson ratio \\
$B$ & Skempton porepressure coefficient \\
$\rho_s$ & solid density \\
$\rho_f$ & fluid density \\
$\rho_a$ & apparent mass density \\
$\phi$ & porosity \\
$\kappa$ & permeability coefficient \\
$\rho=(1-\phi)\rho_s + \phi\rho_f$ & bulk density \\
$\alpha=\frac{3(\nu_u-\nu_p)}{B(1-2\nu_p)(1+\nu_u)}$ &  compressibility \\
$R=\frac{2\phi^2\mu B^2(1-2\nu_p)(1+\nu_u)^2}{9(\nu_u-\nu_p)(1-2\nu_u)}$ & constitutive coefficient \\
\hline
\end{tabular}
\label{Tablemodel}
\end{table}

\subsection{Boundary integral equation}
\label{sec:2.2}

Let $E(x,y)$ be the fundamental solution of the adjoint operator $L^*$ of $L$ in $\R^2$ given by
\ben
E(x,y)=
\begin{bmatrix}
E_{11}(x,y) & E_{12}(x,y) \\
E_{21}^\top(x,y) & E_{22}(x,y)
\end{bmatrix},\quad x\ne y,
\enn
with
\ben
&& {E_{11}}(x,y) = \frac{1}{\mu}{\gamma _{{k_s}}}(x,y)I + \frac{1}{{(\rho  - \beta {\rho _f})}{\omega ^2}}{\nabla _x}\nabla _x^\top\left[ {{\gamma _{{k_s}}}(x,y) - {\frac{k_{p}^{2}-k_{2}^{2}}{k_{1}^{2}-k_2^2}}{\gamma _{{k_1}}}(x,y) + {\frac{k_{p}^{2}-k_{1}^{2}}{k_{1}^{2}-k_{2}^{2}}}{\gamma _{{k_2}}}(x,y)} \right],\\
&& {E_{12}}(x,y) =  \frac{i\omega\gamma}{(\lambda+2\mu)(k_{1}^{2}-k_{2}^{2})}{\nabla _x}[ {{\gamma _{{k_1}}}(x,y) - {\gamma _{{k_2}}}(x,y)} ],\\
&& {E_{21}}(x,y) =  -\frac{\gamma}{(\lambda+2\mu)(k_{1}^{2}-k_{2}^{2})}{\nabla _x}[ {{\gamma _{{k_1}}}(x,y) - {\gamma _{{k_2}}}(x,y)} ],\\
&& {E_{22}}(x,y) = \frac{i\rho_f\omega}{{\beta(k_1^2 - k_2^2)}}[ {(k_p^2 - {k_1 ^{2}}){\gamma _{{k_1}}}(x,y) - (k_p^2 - {k_2 ^{2}}){\gamma _{{k_2}}}(x,y)} ],
\enn
in which
\ben
\gamma _{k_t}(x,y) = \frac{i}{4}H_0^{(1)}({k_t}\left| {x - y} \right|), \quad x \ne y, \quad t=s,p,1,2,
\enn
denotes the fundamental solution of the Helmholtz equation in $\R^2$ with wave number $k_t$. Here, $k_p$ and $k_s$, referred as the compressional and shear wave numbers, respectively, are given by
\ben
k_p:=\omega\sqrt{\frac{\rho-\beta\rho_f}{\lambda  + 2\mu}}, \quad k_s:=\omega\sqrt{\frac{\rho-\beta\rho_f}{\mu}}.
\enn
The wave numbers $k_1$, $k_2$, satisfying $\mbox{Im}(k_i)\ge 0, i=1,2$, are the roots of the characteristic system
\ben
k_{1}^2+k_{2}^{2}=q(1+ \epsilon )+k_p^{2}, \quad k_1^{2}k_2^{2}=qk_p^{2},\quad \epsilon=\frac{i\omega\gamma(\alpha-\beta)}{q(\lambda+2\mu)},
\enn
and it follows that
\ben
k_1 &=& \sqrt {\frac{1}{2}\left\{ {k_p^2 + q(1 + \varepsilon ) + \sqrt {[ {k_p^2 + q(1 + \varepsilon )}] - 4q k_p^2} } \right\}}, \\
k_2 &=& \sqrt {\frac{1}{2}\left\{ {k_p^2 + q(1 + \varepsilon ) - \sqrt {[ {k_p^2 + q(1 + \varepsilon )} ] - 4qk_p^2} } \right\}}.
\enn

From the potential theory, the unknown function $U$ in $\Omega^c$ can be represented
as a combination of the single-layer and double-layer potentials
\be
U(x)=(D-i\eta S)(\varphi)(x), \qquad x\in \Omega^c, \quad Re(\eta) \ne 0,
\label{solrep}
\en
where
\be
\label{singlelayer}
\mathcal{S}(\varphi)(x):=\int_\Gamma  (E(x,y))^\top\varphi (y)ds_y,
\en
\be
\label{doublelayer}
\mathcal{D}(\varphi)(x):=\int_\Gamma (\widetilde T^*(\pa_y,\nu_y)E(x,y))^\top \varphi(y)ds_y,
\en
denote the single-layer and double-layer potentials, respectively. Here $\widetilde T^*$ denotes the corresponding Neumann boundary operator of $L^*$ given by
\be
\widetilde T^ * (\partial ,\nu ) = \begin{bmatrix}
	{T(\partial ,\nu )}&-{ i\omega \alpha \nu }\\
	{ - \beta {\nu ^\top}}&{\frac{{i\beta }}{{\omega {\rho _f}}}{\partial _\nu }}
	\end{bmatrix}.
\label{Tadjoint}
\en
The combination form of solution representation (\ref{solrep}) has been widely used for the corresponding acoustic and elastic scattering problems~\cite{BY20,CK98}, and the resulting boundary integral equation can ensure unique solvability for all frequencies. Operating with the boundary operator $\widetilde T$ on (\ref{solrep}), taking the limit as $x\rightarrow\Gamma$, the CBIE
\be
[i\eta(\frac{I}{2}-K^{\prime})+N](\varphi)=F \quad \mbox{on} \quad \Gamma
\label{BIE1}
\en
is obtained. Here $I$ denotes the identity operator and the boundary integral operators $K'$ and $N$ are defined by
\be
K'(\varphi)(x) =\widetilde T({\partial_x},{\nu_x}) \int_\Gamma  {(E(x,y))^{\top}\varphi (y)d{s_y}},
\label{KP}
\en
and
\begin{equation}
N(\varphi)(x) = \widetilde T({\partial_x},{\nu_x})\int_\Gamma  {{{\left({{\widetilde T}^*}({\partial_y},{\nu_y})E(x,y)\right)}^\top}\varphi (y)d{s_y}},
\label{N}
\end{equation}
in the sense of Cauchy principal value and Hadamard finite part~\cite{HW08}, respectively, in view of the strongly singular and hyper-singular character of the corresponding kernels.

\begin{remark}
\label{BIEremark}
It is well-known that the unkown solution $U$ in $\Omega^c$ can also be expressed simply as a single-layer potential
\be
\label{solrep1}
U(x) = \mathcal{S}(\psi)(x),\quad x \in {\Omega^c},
\en
or as a double layer potential
\be
\label{solrep2}
U(x)=\mathcal{D}(\psi)(x),\quad x \in {\Omega^c}.
\en
Operating with the boundary operator $\widetilde T$ on (\ref{solrep1}) and (\ref{solrep2}), taking the limit as $x\rightarrow\Gamma$, we can obtain the following boundary integral equations
\be
\label{BIE2}
(-\frac{I}{2}+K')(\psi)=F \quad \mbox{on}\quad\Gamma,
\en
and
\be
\label{BIE3}
N(\psi)=F \quad \mbox{on}\quad\Gamma,
\en
for the unknown density $\psi$ in (\ref{solrep1}) and (\ref{solrep2}), respectively.
\end{remark}

\begin{remark}
Unfortunately, the unique solvability of integral equation (\ref{BIER}) can not be derived following the classical approach to prove the corresponding unique solvability of combined field integral equations for acoustic and elastic problems~\cite{CK98,BXY17}. The main reason is that due to the special Neumann boundary operator $\widetilde{T}(\pa,\nu)$, there is no appropriate Green's first identity for the considered poroelastic problems and only the following Green's second identity holds
\ben
\int_\Omega \left(LU\cdot V-U\cdot L^*V\right)\,dx= \int_\Gamma \left( \widetilde{T}(\pa,\nu)U\cdot V-U\cdot \widetilde{T}^*(\pa,\nu)V\right)\,ds.
\enn
The uniqueness of integral equation (\ref{BIER}) still remains open, however, as discussed in Section~\ref{sec:5}, the determinant of the stiffness matrix resulting from the discretization of (\ref{BIER}) does not contain any significant shape trough which generally can indicate the existence of eigenfrequency~\cite{YHX17}.
\end{remark}

\begin{figure}[htb]
\centering
\includegraphics[scale=0.2]{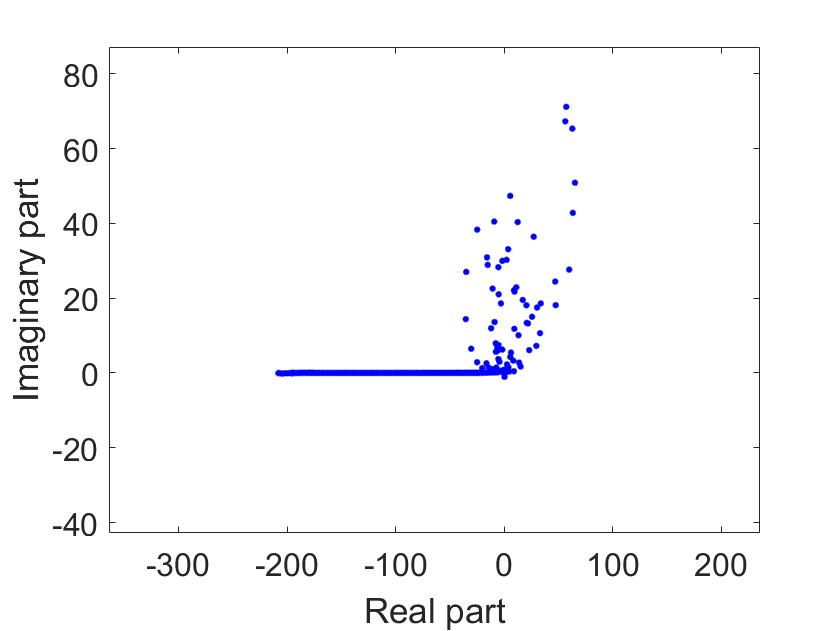}
\caption{Eigenvalue distribution of the operator $N$ for a circular scatterer.}
\label{Neig}
\end{figure}

\begin{remark}
Note that the eigenvalues of the hypersingular integral operator $N$ accumulate at infinity, see Figure~\ref{Neig} in which $\Gamma=\{|x|=1,x\in\R^2\}$ is considered. As a result, the solution of the integral equation (\ref{BIE1}) and (\ref{BIE3}) by means of Krylov-subspace iterative solvers such as GMRES generally requires large numbers of iterations.
\end{remark}

\section{Operator spectrum and regularized boundary integral equation}
\label{sec:3}

To avoid the difficulty arising from the use of hypersingular integral operator in (\ref{BIE1}), we propose a new RBIE   for solving the poroelastic problem in Section~\ref{sec:2.1}. Here, two types of ``regularization'' are employed. By means of introducing a regularized operator $R$ and studying the spectrum properties of poroelastic integral operators (Section~\ref{sec:3.1}), we derive a new boundary integral equation which corresponds to a linear system  with a better convergence property  after the discretization compared to that of (\ref{BIE1}), see Section~\ref{sec:3.2}. Meanwhile, the strongly-singular and hyper-singular integral operators are re-expressed into compositions of weakly-singular operators and differentiation operators in directions tangential to the boundary using the G\"unter derivative and integration by parts (Section~\ref{sec:3.3}) and we call this as a regularization procedure.

\subsection{operator spectrum}
\label{sec:3.1}

The spectra of the integral operators $K'$ is concluded in the following theorem.
\begin{theorem}
\label{spectra}
Let $\Gamma$ denote a smooth closed surface in two-dimensional space. Then $K'^2 - \begin{bmatrix}
 C_{\lambda ,\mu }^2I & 0\\
 0 & 0
 \end{bmatrix}: H^{1/2}(\Gamma)^3\rightarrow H^{1/2}(\Gamma)^3$ is compact, where $C_{\lambda,\mu}$ is a constant given by
\ben
C_{\lambda ,\mu }= \frac{\mu }{2(\lambda  + 2\mu )} < \frac{1}{2}.
\enn
Furthermore, the spectrum of $K^{\prime}$ consists of three nonempty sequences of eigenvalues which accumulate at $0$, $C_{\lambda,\mu}$ and $-C_{\lambda,\mu}$ respectively.
\end{theorem}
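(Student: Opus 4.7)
The plan is to exhibit $K'$ as a compact perturbation of the static elastic adjoint double-layer operator acting on the displacement components (padded by a vanishing pressure block), and then to invoke the spectral theorem from~\cite{AJKKY18}. The key structural input is a near-diagonal expansion of the Biot fundamental solution $E(x,y)$: each $\gamma_{k_t}(x,y)$ splits as $-\tfrac{1}{2\pi}\log|x-y|$ plus a smoother remainder, and the three Helmholtz-type fundamental solutions appearing inside the $\nabla_x\nabla_x^\top$ bracket of $E_{11}$ have coefficients satisfying $1-\tfrac{k_p^2-k_2^2}{k_1^2-k_2^2}+\tfrac{k_p^2-k_1^2}{k_1^2-k_2^2}=0$, so their leading logarithmic singularities cancel exactly. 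After this cancellation $E_{11}$ equals the static elastic Kelvin kernel plus a weakly-singular remainder, while $E_{12}$, $E_{21}$ and $E_{22}$—being constructed out of $\gamma_{k_1}-\gamma_{k_2}$ or a single $\gamma_{k_t}$—are at worst logarithmically singular.

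Applying $\widetilde T(\partial_x,\nu_x)$ as in~\eqref{KP} to the transposed kernel and identifying the principal parts, the $(1,1)$ block of $K'$ reproduces the static elastic adjoint double-layer operator $K'_0$ up to a weakly-singular error, while the remaining entries (the $-\alpha\nu$ contribution paired with $E_{12}$, the $-i\omega\beta\nu^\top$ contribution on $E_{21}$, and the normal derivative $\tfrac{i\beta}{\omega\rho_f}\partial_{\nu_x}$ acting on the logarithmic $E_{22}$) all contribute integral operators with weakly-singular kernels, hence compact maps $H^{1/2}(\Gamma)^3\to H^{1/2}(\Gamma)^3$. This yields
\[
K'=\begin{bmatrix}K'_0 & 0\\ 0 & 0\end{bmatrix}+\mathcal{C},\qquad \mathcal{C}\ \text{compact}.
\]
Squaring this decomposition, using that the ideal of compact operators is closed under composition with bounded operators, and invoking the main spectral theorem of~\cite{AJKKY18} (which states $(K'_0)^2-C_{\lambda,\mu}^2 I$ is compact on $H^{1/2}(\Gamma)^2$) then produces
\[
K'^2-\begin{bmatrix}C_{\lambda,\mu}^2 I & 0\\ 0 & 0\end{bmatrix}\ \text{is compact on}\ H^{1/2}(\Gamma)^3,
\]
which is the polynomial compactness claimed.

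From polynomial compactness with annihilating polynomial $z(z^2-C_{\lambda,\mu}^2)$, the spectrum of $K'$ is a countable set whose only possible accumulation points are $0$, $C_{\lambda,\mu}$ and $-C_{\lambda,\mu}$. Non-emptiness of the two sequences at $\pm C_{\lambda,\mu}$ is inherited from~\cite{AJKKY18} for $K'_0$ together with essential-spectrum stability under the compact perturbation $\mathcal{C}$, while non-emptiness at $0$ is forced by the genuinely independent pressure degree of freedom (the zero block and the compact correction on that component generate an infinite-dimensional contribution whose spectrum must accumulate at $0$); the numerical tests of Section~\ref{sec:3.1} corroborate this. The main technical obstacle I anticipate is verifying the compactness assertion for the $(2,2)$ entry $\tfrac{i\beta}{\omega\rho_f}\partial_{\nu_x}\!\int_\Gamma E_{22}(x,y)\varphi(y)\,ds_y$: a priori the normal derivative could generate a Cauchy-principal-value kernel, and one must carefully exploit both the logarithmic (rather than $|x-y|^{-1}$) nature of $E_{22}$ and the $C^2$-smoothness of $\Gamma$, which forces $\partial_{\nu_x}\log|x-y|$ to remain bounded on the diagonal, in order to conclude only weak singularity of the resulting kernel. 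A parallel verification is required for the strongly-singular $T(\partial_x,\nu_x)$-part of the $(1,1)$ block, where the Giraud-type estimate must be executed on the difference between the dynamic and static kernels.
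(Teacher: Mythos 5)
Your argument is essentially the paper's own: both identify $K'$ as a compact perturbation of the static ($\omega=0$) operator, reduce via the weak singularity of the off-displacement blocks to the elastic block $K'_{1,0}$, and invoke the result of~\cite{AJKKY18} that $K_{1,0}^{\prime 2}-C_{\lambda,\mu}^2 I$ is compact; your direct squaring of the decomposition $K'=\mathrm{diag}(K'_{1,0},0)+\mathcal{C}$ is just a repackaging of the paper's algebraic identity $K'^2-\mathrm{diag}(C_{\lambda,\mu}^2I,0)=K'(K'-K'_0)+(K'-K'_0)K'_0+(K_0'^{2}-\mathrm{diag}(C_{\lambda,\mu}^2I,0))$. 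The kernel cancellations and the boundedness of $\partial_{\nu_x}\log|x-y|$ on smooth curves that you flag as the technical work are exactly what the paper relies on (and states without detail), and, like the paper, you leave the non-emptiness of the three eigenvalue sequences at the level of a sketch.
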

\begin{proof}
Recall the definition of $K'$ that
\ben
K'(U)(x)& =& \widetilde T(\partial _x,\nu _x)\int_\Gamma (E(x,y))^\top U(y)ds_y \\
& =& \begin{bmatrix}
	K'_1& K'_2\\
	K'_3 & K'_4
	\end{bmatrix}\begin{bmatrix}
	u\\
	p
	\end{bmatrix}(x),\quad x\in\Gamma,
\enn
where the operators $K'_j,j=1,\cdots,4$ are denoted as
\ben
K'_1(u)(x) &=& \int_\Gamma  \left(T(\partial_x,\nu_x)E_{11}-\alpha \nu_x E_{12}^\top\right) u(y)ds_y, \\
K'_2(p)(x) &=& \int_\Gamma  \left(T(\partial_x,\nu_x)E_{21}-\alpha \nu_x E_{22}\right) p(y)ds_y, \\
K'_3(u)(x)& =& \int_\Gamma  \left(-i\omega\beta\nu_x^{\top}E_{11}+ \frac{i\beta}{\rho_f\omega}\partial_{\nu_x}E_{12}^\top\right)u(y)ds_y, \\
K'_4(p)(x)& =& \int_\Gamma  \left(-i\omega\beta\nu_x^{\top}E_{21} +\frac{i\beta}{\rho_f\omega}\partial_{\nu_x}E_{22}\right)p(y)ds_y.
\enn
Let $K_0^\prime$ denote the static $(\omega=0)$ boundary integral operator corresponding to $K^\prime$
\ben
K_0' &=& \widetilde T_0(\partial _x,\nu _x)\int_\Gamma E_0(x,y) U(y)ds_y\\
&=& \left[ \begin{array}{*{20}{c}}
	K'_{1,0}&K'_{2,0}\\
	K'_{3,0}&K'_{4,0}
	\end{array} \right]\left[ {\begin{array}{*{20}{c}}
	u\\
	p
	\end{array}} \right](x), \quad x\in\Gamma
\enn
where
\ben
\widetilde T_0(\partial _x,\nu _x)=\left[ \begin{array}{*{20}{c}}
	T(\partial _x,\nu _x)&-\alpha\nu_x\\
	0&\kappa\partial_{\nu_x}
	\end{array} \right]
\enn
and
\ben
E_{0}(x,y)=\left[ \begin{array}{*{20}{c}}
	E_{0,11}&E_{0,12}\\
	E_{0,21}&E_{0,22}
\end{array} \right]=\left[ \begin{array}{*{20}{c}}
	E_{e,0}&-\frac{\alpha(x-y)}{2(\lambda+2\mu)}\ln\left|x-y\right|\\
	0&-\frac{1}{2\pi}\ln\left|x-y\right|
	\end{array} \right]
\enn
is the fundamental solution of static poroelastic problem with $E_{e,0}(x,y)$ being the fundamental solution of Lam\'e equation which is given by
\begin{equation*}
E_{e,0}(x,y)=\frac{\lambda+3\mu}{4\pi\mu(\lambda+2\mu)}\left\{ -\ln|x-y|I+\frac{\lambda+\mu}{\lambda+3\mu}\frac{1}{|x-y|^2}(x-y)(x-y)^\top \right\}.
\end{equation*}
Thus we can obtain that
\ben
K'_{1,0}(u)(x) &=& \int_\Gamma  T(\partial_x,\nu_x)E_{0,11} u(y)ds_y, \\
K'_{2,0}(p)(x) &=& \int_\Gamma  \left(T(\partial_x,\nu_x)E_{0,12}-\alpha\nu_xE_{0,22}\right) p(y)ds_y, \\
K'_{3,0}(u)(x) &=& 0, \\
K'_{4,0}(p)(x) &=& \int_\Gamma  \kappa \pa_{\nu_x}E_{0,22} p(y)ds_y. \\
\enn
From \cite{AJKKY18}, it is known that $K_{1,0}^{'2}-C_{\lambda,\mu}^{2}I$ is compact. It can be easily deduced that the kernels of $K'_{j,0},j=2,4$ are weakly-singualar implying that $K'_{j,0},j=2,4$ are compact. Therefore,
\ben
 K^{\prime 2} -\begin{bmatrix}
 C_{\lambda ,\mu }^2I & 0\\
 0 & 0
 \end{bmatrix}
=K'(K' - K'_0)+ (K' - K'_0)K'_0 +\begin{bmatrix}
 {K_{1,0}^{\prime2} - C_{\lambda ,\mu }^2I  } & K'_{1,0}K'_{2,0}+K'_{2,0}K'_{4,0}\\
 0 & K_{4,0}^{\prime2}
 \end{bmatrix}
\enn
is compact due to the fact that $K' - K'_0$ has a weakly-singular kernel and is a compact operator. The inequality $0<C_{\lambda,\mu}<1/2$ can be obtained easily from the conditions $\lambda+\mu>0,\mu>0$. This completes the proof.
\end{proof}

From Theorem~\ref{spectra}, it can be seen that the accumulation points of the eigenvalues of $K'$ are independent of the frequency. In addition, we can conclude from the Calder\'{o}n relation
\begin{equation}
NS=-\frac{I}{4}+K'^{2}
\end{equation}
that the spectrum of the composite operator $NS$, which plays an essential role in the regularized integral equations proposed in the following section, consists of two nonempty sequences of eigenvalues which accumulate at $-\frac{1}{4}$ and $-\frac{1}{4}+C_{\lambda,\mu}^2$.

In order to verify the above results numerically, we consider the problem of poroelastic scattering by a circular scatterer of radius one, and choose the same values of coefficients as in Section~\ref{sec:5} which gives $C_{\lambda,\mu}=0.132$. Figure~\ref{KNSeig} displays the eigenvalue distribution of the integral operators $K'$ and $NS$ from which the eigenvalues of $K'$ and $NS$ are seen to accumulate at the points predicted by our theoretical results. Here, the eigenvalue computation, on a basis of the exact regularized formulations for the strongly-singular operator $K'$ and the hypersingular operator $N$ given in Section~\ref{sec:3.3}, has been implemented by means of the high-order Nystr\"om methodology (see Section~\ref{sec:4}) together with FFT for evaluation of tangential derivatives and choosing a sufficiently large number of discretization points.

\begin{figure}[htb]
\centering
\begin{tabular}{cc}
\includegraphics[scale=0.2]{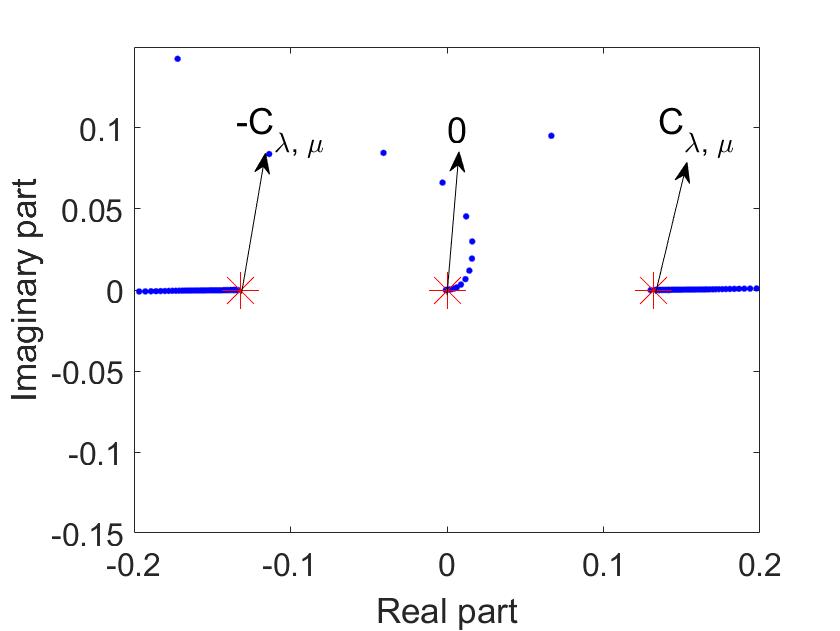} &
\includegraphics[scale=0.2]{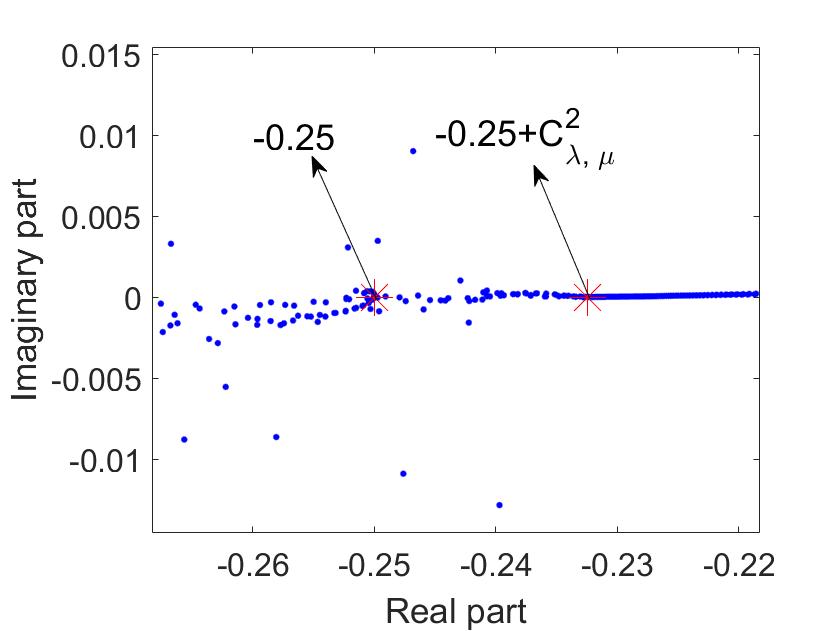} \\
(a) $K'$ & (b) $NS$
\end{tabular}
\caption{Eigenvalue distribution of the operators $K'$ and $NS$ for a circular scatterer.}
\label{KNSeig}
\end{figure}

\subsection{Regularized boundary integral equation}
\label{sec:3.2}

Relying on the spectra studies of the poroelastic integral operators presented in Section~\ref{sec:3.1}, we propose in this section a RBIE by utilizing a regularization operator $\mathcal{R}$ in addition to the aforementioned double-layer and hypersingular operators $K'$ and $N$. In light of the regularized integral equation method discussed in~\cite{BET12} for acoustic problems, $\mathcal{R}$ is specified as the static ($\omega=0$) single-layer operator $S_0$ corresponding to $S$ in this section (see the comparison between $\mathcal{R}=S_0$ and other selections of the regularization operator $R$ in Section~\ref{sec:5}). Then replacing the solution presentation (\ref{solrep}) by
\be
U(x)=(D\mathcal{R}-i\eta S)(\varphi)(x), \quad x\in \Omega^c.
\label{solrepR}
\en
we obtain the following RBIE
\be
\left[ i\eta(\frac{I}{2}-K^\prime)+N\mathcal{R} \right](\varphi)=F \quad \mbox{on} \quad \Gamma,
\label{BIER}
\en
instead of the classical CBIE (\ref{BIE1}). It follows from the spectra results in Section~\ref{sec:3.1} that the spectrum of the regularized combined field integral operator on the left hand side of (\ref{BIER}) consists of three non-empty sequences of eigenvalues which converge to $-1/4+i\eta/2$, $-1/4+C^2_{\lambda,\mu}+i\eta(1/2+C_{\lambda,\mu})$ and $-1/4+C^2_{\lambda,\mu}+i\eta(1/2-C_{\lambda,\mu})$, respectively, and these values are all bounded away from zero and infinity, see Figure~\ref{KPNSeig}.

\begin{figure}[htb]
\centering
\includegraphics[scale=0.2]{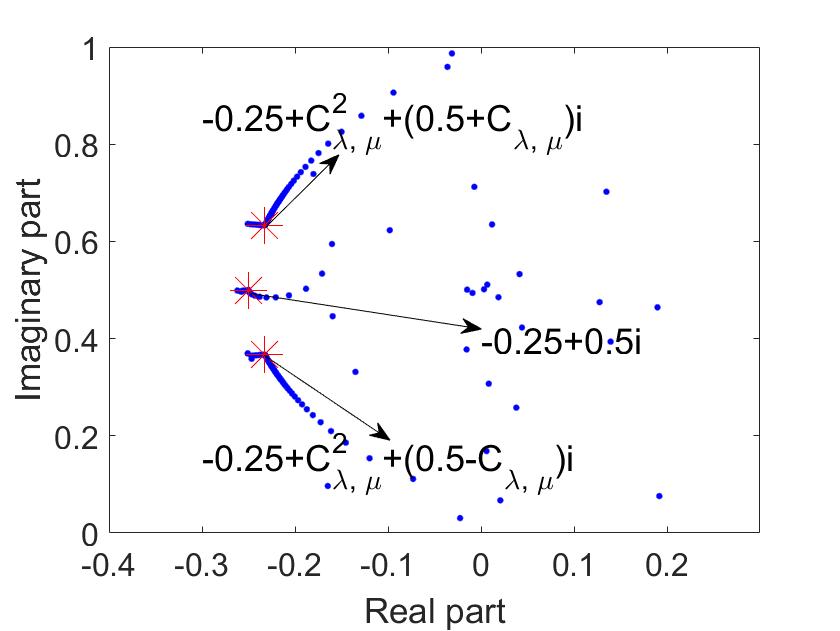}
\caption{Eigenvalue distribution of the integral operator $i(\frac{I}{2}-K^{\prime})+NS$ for a circular scatterer.}
\label{KPNSeig}
\end{figure}

\subsection{Strong-singularity and hyper-singularity regularization}
\label{sec:3.3}

As aforementioned, the integral operators $K'$ and $N$ are strongly-singular and hyper-singular, respectively. In this section, following the techniques proposed in~\cite{YHX17}, we re-expresses these two operators as combinations of weakly-singular integral operators and tangential derivative. (The main approach to derive the regularized formulations will be shown in Appendix.) Applying these new formulations together with the Nystr\"{o}m method to be described in Section~\ref{sec:4} and the linear algebra solver GMRES to equation (\ref{BIER}) would lead  then leads to the proposed solvers for the boundary value problem in Section~\ref{sec:2.1}.

From~\cite{YHX17}, it is known that the traction operator $T(\partial ,\nu )$ can be rewritten as
\ben
T(\partial ,\nu )u(x) = (\lambda  + 2\mu )\nu (\nabla  \cdot u) + \mu {\partial _\nu }u + \mu M(\partial ,\nu )u,
\enn
where the operator $M(\partial ,\nu )$, whose elements are also called G\"{u}nter derivatives\cite{HW08}, admits
\ben
M(\partial ,\nu )u(x)= A\frac{du}{ds},\quad A = \begin{bmatrix}
	0&{ - 1}\\
	1&0
	\end{bmatrix}.
\enn
For the strongly-singular integral operator $K'$, the following regularized formulation can be obtained using the notations in Theorem~\ref{spectra}.

\begin{lemma}
\label{regKP}
The boundary integral operators $K'_j,j=1,\cdots,4$ can be expressed as
\be
\label{KPP}
K'_j=K_j^{\prime1}+\frac{d}{ds}K_j^{\prime2},\quad j=1,\cdots,4,
\en	
where
\ben
K_1^{\prime1}(u)(x) &=& \int_\Gamma  \nu _x\nabla _x^\top\left[ - (\gamma _{k_s}(x,y) - \gamma _{k_1}(x,y)) + \frac{k_2^2 - q}{k_1^2 - k_2^2}(\gamma _{k_s}(x,y) - \gamma _{k_1}(x,y))\right]u(y) ds_y \nonumber\\
&\quad& + \int_\Gamma  \left[\partial_{\nu_x}\gamma_{k_s}(x,y)I - \alpha \nu_xE_{12}^\top(x,y) \right] u(y)ds_y,\\
K_1^{\prime2}(u)(x) &=& A\int_\Gamma  \left[2\mu E_{11}(x,y) - \gamma _{k_s}(x,y)I\right] u(y)ds_y,\\
K_2^{\prime1}(p)(x) &=& \int_\Gamma  \left[\frac{\gamma }{{k_1^2 - k_2^2}} (k_1^2\gamma _{k_1}(x,y) - k_2^2\gamma _{k_2}(x,y))\nu _x - \alpha \nu _xE_{22}(x,y)\right]p(y)ds_y,\\
K_2^{\prime2}(p)(x) &=& -\frac{2\mu\gamma}{(\lambda+2\mu)(k_1^2-k_2^2)}A\int_\Gamma \nabla _x\left[\gamma_{k_1}(x,y) - \gamma_{k_2}(x,y)\right]p(y)ds_y,\\
K_3^{\prime1}(u)(x) &=& \int_\Gamma  \left[ - i\omega \beta \nu _x^\top E_{11}(x,y) + C_0(k_1^2\gamma _{k_1}(x,y) - k_2^2\gamma_{k_2}(x,y))\nu _x^\top\right] u(y)ds_y,\\
K_3^{\prime2}(u)(x) &=& C_0\int_\Gamma  \nabla _x^\top\left[\gamma_{k_1}(x,y) - \gamma_{k_2}(x,y)\right]Ap(y)ds_y,\\
K_4^{\prime1}(p)(x) &=& \int_\Gamma  \nu_x^\top\left[i\omega \beta E_{21}(x,y) + \nabla _x\left(\frac{k_p^2 - k_1^2}{k_1^2 - k_2^2}\gamma_{k_1}(x,y) - \frac{k_p^2 - k_2^2}{k_1^2 - k_2^2}\gamma _{k_2}(x,y)\right)\right] p(y)ds_y,\\
K_4^{\prime2}(p)(x) &=& 0,
\enn
with the constant $C_0$ being
\ben
C_0=\frac{\beta \gamma}{\rho_{f}(\lambda+2\mu)(k^2_1-k^2_2)}.
\enn
\end{lemma}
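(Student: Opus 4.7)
The plan is to substitute the G\"unter-derivative decomposition
$T(\partial_x,\nu_x)u = (\lambda+2\mu)\nu_x(\nabla_x\cdot u) + \mu\partial_{\nu_x}u + \mu A\tfrac{du}{ds_x}$,
together with its scalar analogues $-i\omega\beta\nu_x^\top$ and $\tfrac{i\beta}{\omega\rho_f}\partial_{\nu_x}$ coming from the bottom row of $\widetilde{T}$, directly into each block $K'_j$, and then to pull the tangential derivative $A\tfrac{d}{ds_x}$ outside the integral. This produces the $\tfrac{d}{ds}K_j^{\prime 2}$ piece, while what remains in the integrand is a combination of a divergence part and a normal-derivative part that reduce to weakly-singular kernels by invoking the Helmholtz identities $\Delta_x\gamma_{k_t}(x,y) = -k_t^2\gamma_{k_t}(x,y)$ for $x\neq y$, together with the algebraic relations among $k_p$, $k_s$, $k_1$, $k_2$ and $q$.

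I would begin with $K'_1$, since it contains the strongest singularity. The $\mu M$ contribution of the traction produces $\mu A\tfrac{d}{ds_x}\int_\Gamma E_{11}u\,ds_y$; writing $E_{11} = \tfrac{1}{\mu}\gamma_{k_s}I + \tfrac{1}{(\rho-\beta\rho_f)\omega^2}\nabla_x\nabla_x^\top[\cdots]$ and adding/subtracting the $\gamma_{k_s}I$ piece yields precisely the kernel $2\mu E_{11} - \gamma_{k_s}I$ of $K_1^{\prime 2}$, with the residual $\gamma_{k_s}I$ combining with the $\mu\partial_{\nu_x}$ part of the decomposition to form the $\partial_{\nu_x}\gamma_{k_s}I$ summand of $K_1^{\prime 1}$. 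The remaining $(\lambda+2\mu)\nu_x\nabla_x\cdot$ part acts on the dyadic block $\nabla_x\nabla_x^\top[\cdots]$; invoking $\nabla_x\cdot\nabla_x\gamma_{k_t} = -k_t^2\gamma_{k_t}$ collapses each summand to a scalar multiple of $\gamma_{k_t}$, and after the coefficient algebra the result reorganises into the stated expression $\nu_x\nabla_x^\top\bigl[-(\gamma_{k_s}-\gamma_{k_1}) + \tfrac{k_2^2-q}{k_1^2-k_2^2}(\gamma_{k_s}-\gamma_{k_1})\bigr]$. The $-\alpha\nu_x E_{12}^\top$ term is already weakly singular and is absorbed unchanged into $K_1^{\prime 1}$. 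The remaining blocks $K'_2$, $K'_3$, $K'_4$ are treated in the same spirit: since $E_{12}$, $E_{21}$, $E_{22}$ are already a gradient, a gradient, and a scalar weakly-singular kernel respectively, the G\"unter step for $K'_2$ produces $K_2^{\prime 2}$ as $A\tfrac{d}{ds_x}$ acting on a gradient kernel while the divergence and normal-derivative pieces, reduced by the Helmholtz identity, assemble into $K_2^{\prime 1}$; for $K'_3$ and $K'_4$ only the bottom row of $\widetilde T$ acts, the tangential-derivative piece $K_3^{\prime 2}$ emerging from the tangential-normal decomposition of $\nabla_x$ on $\Gamma$ applied to the $\nabla_x$-block of $E_{12}^\top$, and $K_4^{\prime 2} = 0$ because the $K'_4$ integrand is already weakly singular, requiring no principal-part extraction.

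The main technical obstacle is the algebraic bookkeeping in the $K_1^{\prime 1}$ and $K_3^{\prime 2}$ derivations: after $\Delta_x\gamma_{k_t}$ has been replaced by $-k_t^2\gamma_{k_t}$, the prefactor $\tfrac{1}{(\rho-\beta\rho_f)\omega^2}$ must combine with the three scalar coefficients in the $E_{11}$ bracket, via the relations $k_p^2 = \omega^2(\rho-\beta\rho_f)/(\lambda+2\mu)$, $k_s^2 = \omega^2(\rho-\beta\rho_f)/\mu$, $k_1^2 k_2^2 = q k_p^2$ and $k_1^2+k_2^2 = q(1+\epsilon)+k_p^2$, so as to collapse exactly to the factor $\tfrac{k_2^2-q}{k_1^2-k_2^2}$ appearing in $K_1^{\prime 1}$ and to the constant $C_0 = \tfrac{\beta\gamma}{\rho_f(\lambda+2\mu)(k_1^2-k_2^2)}$ in $K_3^{\prime 2}$. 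Once these coefficient identities are verified for $K'_1$, the remaining three blocks follow from parallel and shorter computations.
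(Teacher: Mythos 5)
Your plan is correct and follows essentially the same route as the paper: decompose the traction via the G\"unter derivative, pull $A\,\tfrac{d}{ds_x}$ outside the integral, and reduce the remaining divergence and normal-derivative pieces with $\Delta_x\gamma_{k_t}=-k_t^2\gamma_{k_t}$; the paper merely imports the resulting identity for $T(\partial_x,\nu_x)E_{11}$ from \cite{BXY191} and treats the gradient-type kernels $E_{21}$ and $E_{12}^\top$ through the identities $T(\partial,\nu)\nabla=(\lambda+2\mu)\nu\Delta+2\mu M(\partial,\nu)\nabla$ and $\partial_\nu\nabla-M(\partial,\nu)\nabla=\nu\Delta$, exactly as you describe for $K'_2$ and $K'_3$. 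The only loose point in your bookkeeping is the origin of the $K_1^{\prime 2}$ kernel: the coefficient $2\mu E_{11}-\gamma_{k_s}I$ (rather than $\mu E_{11}$) is produced because the $\mu\partial_{\nu_x}$ part of the traction, acting on the dyadic block of $E_{11}$, generates an additional $M(\partial_x,\nu_x)\nabla_x$ contribution via $\partial_\nu\nabla=M(\partial,\nu)\nabla+\nu\Delta$ that must be moved into the tangential-derivative piece --- it does not come merely from adding and subtracting $\gamma_{k_s}I$.
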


Now we consider the hyper-singular operator $N$. For $\psi=(u^\top,p)^\top$, denote
\ben
N(\psi)(x)=\begin{bmatrix}
N_1 & N_2\\
N_3 & N_4
\end{bmatrix}\begin{bmatrix}
u\\
p
\end{bmatrix}(x),\quad x\in\Gamma.
\enn
The regularized formulations for the operators $N_j,j=1,\cdots,4$ are given in the following Lemmas~\ref{regN1}-\ref{regN4}.
\begin{lemma}
\label{regN1}
The hyper-singular operator $N_1$ can be expressed as
\be
\label{N1}
N_{1}=N^1_1+\frac{d}{ds}N^2_1\frac{d}{ds}+\frac{d}{ds}N^3_1+N^4_1\frac{d}{ds},
\en
where
\ben
N^1_{1}(u)(x) &= &-(\rho-\beta\rho_{f}) \omega^{2}\int _{\Gamma}\gamma_{k_{s}}(x,y)\left(\nu_{x}\nu_{y}^{\top} -\nu^{\top}_{x}\nu_{y}I-J_{\nu_{x},\nu_{y}}\right)u(y)ds_{y},\\ &\quad& +\int_{\Gamma}\left[C_{1}\gamma_{k_{1}}(x,y)-C_{2}\gamma_{k_{2}}(x,y)\right] \nu_{x}\nu_{y}^{\top}u(y)ds_{y}\\
N^2_1(u)(x)&= &4\mu\int_{\Gamma}\left[\gamma_{k_{s}}(x,y)+ AE_{11}(x,y)A\right]u(y)ds_y,\\
N^3_1(u)(x)&= & \int_\Gamma A\nabla _y \left[-2\mu(\gamma_{k_s}(x,y)-\gamma_{k_1}(x,y) )+C_3(\gamma_{k_1}(x,y)-\gamma_{k_2}(x,y) )\right] \nu^\top_y u(y)ds_y,\\
N^4_1(u)(x)&= &\int_\Gamma \nu_x\nabla _x^\top \left[-2\mu(\gamma_{k_s}(x,y)-\gamma_{k_1}(x,y) )+C_3(\gamma_{k_1}(x,y)-\gamma_{k_2}(x,y) )\right]Au(y)ds_y,
\enn
with the constant $C_{i}$, $i=1,2,3$ being
\ben
{C_1} = \frac{{k_1^2(k_1^2 - q)(\lambda  + 2\mu ) -2i\omega \alpha \gamma k^{2}_{1}}}{{ (k_1^2 - k_2^2)}}, \qquad {C_2} = \frac{{ k_2^2(k_2^2 - q)(\lambda  + 2\mu ) -2i\omega \alpha \gamma k^{2}_{2}}}{{ (k_1^2 - k_2^2)}}
\enn
\ben
{C_3} = \frac{{2\mu }}{{k_1^2 - k_2^2}}(-\frac{{i\omega \gamma \alpha }}{{\lambda  + 2\mu }} + k_2^2 - q).
	\enn
\end{lemma}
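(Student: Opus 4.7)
The plan is to reduce $N_1$ to a combination of weakly-singular integrals plus tangential-derivative operators by a three-step procedure. First, I unfold $N_1(u)(x)$ using the definition (\ref{N}) and the block structures of $\widetilde T,\widetilde T^*,E$: the upper-left $2\times 2$ block of the kernel of $N$ receives contributions from $T(\partial_x,\nu_x)$ applied to $(T(\partial_y,\nu_y)E_{11})^\top$, together with the cross-coupling pieces carried by the off-diagonal entries $-\alpha\nu_x$ and $-\beta\nu_y^\top$ of $\widetilde T$ and $\widetilde T^*$ acting through $E_{12}$ and $E_{21}$. I then apply
\[
T(\partial,\nu) = (\lambda+2\mu)\nu\nabla\cdot + \mu\partial_\nu + \mu M(\partial,\nu),\qquad M(\partial,\nu) = A\tfrac{d}{ds},
\]
at both variables $x$ and $y$. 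Since $M$ is purely tangential on $\Gamma$, the outer $M(\partial_x,\nu_x)$-part pulls out of the integral as $\tfrac{d}{ds_x}(A\,\cdot)$, and the inner $M(\partial_y,\nu_y)$-part is transported onto the density $u$ by integration by parts on the closed smooth curve $\Gamma$, producing no boundary contributions.

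Second, the pieces that remain carry double normal-or-divergence derivatives of $E_{11}$ and are hyper-singular at first glance. I reduce them by means of (i) $\nabla_x\gamma_{k_t}(x,y) = -\nabla_y\gamma_{k_t}(x,y)$, (ii) the Helmholtz equation $\Delta\gamma_{k_t} = -k_t^2\gamma_{k_t}$ away from $x=y$, and (iii) the explicit structure of $E_{11}$ as $\mu^{-1}\gamma_{k_s}I$ plus $\nabla_x\nabla_x^\top$ applied to a linear combination of $\gamma_{k_s},\gamma_{k_1},\gamma_{k_2}$. Iteratively applying these, every surviving second-order kernel is converted either into a pointwise weakly-singular term (carrying a $\nu_x\nu_y^\top$ or $\nu_x^\top\nu_y I$ factor, or the symbol $J_{\nu_x,\nu_y}$) or into fresh tangential-derivative pieces that are again moved onto $u$ by integration by parts. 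The characteristic identities $k_1^2+k_2^2 = q(1+\epsilon)+k_p^2$ and $k_1^2k_2^2 = qk_p^2$ are then used to collect coefficients of $\gamma_{k_1}$ and $\gamma_{k_2}$ into the closed-form constants $C_1,C_2,C_3$.

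Third, I regroup the resulting integrals by their outer/inner tangential-derivative structure: terms with neither tangential derivative form $N^1_1$; terms with an outer $\tfrac{d}{ds_x}$ and an inner $\tfrac{d}{ds_y}$ moved onto $u$ form $\tfrac{d}{ds}N^2_1\tfrac{d}{ds}$; and the mixed terms give $\tfrac{d}{ds}N^3_1$ and $N^4_1\tfrac{d}{ds}$, respectively. The specific kernel $\gamma_{k_s}I + AE_{11}A$ that appears in $N^2_1$ reflects the cancellation between the $\nabla\nabla^\top$-part of $E_{11}$ and the $M_xM_y$ contribution after conjugation by $A$: the tangential-tangential component of $\nabla\nabla^\top\gamma_{k_t}$ is exactly what $A(\cdot)A$ extracts, which is why no further reduction is necessary.

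The main obstacle will be the algebraic verification that, after expanding all contributions and performing the Helmholtz reductions, the coefficients in front of $\gamma_{k_1}$ and $\gamma_{k_2}$ collect precisely into $C_1,C_2,C_3$. This requires careful bookkeeping of the cross-coupling contributions through $E_{12},E_{21}$, whose $\nabla_x$ structure feeds both the normal-derivative pieces (producing the $2i\omega\alpha\gamma k_t^2$ corrections inside $C_1,C_2$) and the tangential-derivative pieces (producing the factor $C_3$ in $N^3_1$ and $N^4_1$); matching these against the terms coming from $T(\partial_x,\nu_x)T(\partial_y,\nu_y)E_{11}$ is the delicate technical step.
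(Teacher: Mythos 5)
Your proposal follows essentially the same route as the paper's own proof: decompose $N_1$ into the $T_xT_yE_{11}$ block plus the cross-coupling terms through $E_{12}$, $E_{21}$, $E_{22}$, rewrite both traction operators via the G\"unter-derivative identity $T(\partial,\nu)=(\lambda+2\mu)\nu\nabla\cdot+\mu\partial_\nu+\mu A\frac{d}{ds}$, move the inner tangential derivative onto the density by integration by parts on the closed curve, reduce the remaining second-order kernels with $\nabla_x\gamma=-\nabla_y\gamma$, the Helmholtz equation and the identities $T(\partial,\nu)\nabla=(\lambda+2\mu)\nu\Delta+2\mu M(\partial,\nu)\nabla$ and $\partial_\nu\nabla-M(\partial,\nu)\nabla=\nu\Delta$, and finally regroup by outer/inner tangential-derivative structure. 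This matches the paper's computation of $g_1,\dots,g_4$ and the cross-term reductions, including your (correct) identification of where the $2i\omega\alpha\gamma k_t^2$ corrections in $C_1,C_2$ and the $AE_{11}A$ kernel in $N_1^2$ come from.
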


\begin{lemma}
\label{regN2}
The hyper-singular operator $N_2$ can be expressed as
\be
\label{N2}
N_2=N^1_2+\frac{d}{ds}N^2_2\frac{d}{ds}+\frac{d}{ds}N^3_2,
\en
where
\ben
	N^1_2(p)(x) &=& \beta \int_\Gamma  {{\nu _x}\nabla _x^{\top}\left[({\gamma _{{k_s}}}(x,y) - {\gamma _{{k_1}}}(x,y)) - \frac{{k_2^2 - q}}{{k_1^2 - k_2^2}}({\gamma _{{k_1}}}(x,y) - {\gamma _{{k_2}}}(x,y))\right]{\nu _y}p(y)} d{s_y}\\
	&\quad&- \beta \int_\Gamma  \left[{\partial _{{\nu _x}}}{\gamma _{{k_s}}}(x,y)  - \alpha {\nu _x}E_{12}^\top(x,y)\nu _y \right]p(y)d{s_y}\\
	&\quad& + \frac{{i\beta \gamma }}{{{\rho _f}\omega (k_1^2 - k_2^2)}}\int_\Gamma  \left[{{\partial _{{\nu _y}}}(k_1^2{\gamma _{{k_1}}}(x,y) - k_2^2{\gamma _{{k_2}}}(x,y))}\right] {\nu _x}p(y)d{s_y}\\
	&\quad&+\frac{\alpha }{{k_1^2 - k_2^2}}\int_\Gamma  \partial _{\nu _y}\left[(k_p^2 - k_1^2)k_1^2{\gamma _{{k_1}}}(x,y) - (k_p^2 - k_2^2)k_2^2{\gamma _{{k_2}}}(x,y) \right]{\nu _x}p(y)d{s_y},\\
	N^2_2(p)(x)&=&- \frac{{2i\mu \beta \gamma }}{{{\rho _f}\omega (\lambda  + 2\mu )(k_1^2 - k_2^2)}}\int_\Gamma  {{\nabla _x}\left[{\gamma _{{k_1}}}(x,y) - {\gamma _{{k_2}}}(x,y)\right]} d{s_y}, \\
	N^3_2(p)(x)&=&- \beta \int_\Gamma A\left[2\mu E_{11}(x,y) - \gamma _{k_s}(x,y)I)\right]\nu _yp(y)ds_y\\
	&\quad&- \frac{{2i\mu \beta \gamma }}{{{\rho _f}\omega (\lambda  + 2\mu )(k_1^2 - k_2^2)}}\int_\Gamma  \left[k_1^2{\gamma _{{k_1}}}(x,y) - k_2^2{\gamma _{{k_2}}}(x,y)\right]A {\nu _y}p(y)d{s_y}.
\enn
\end{lemma}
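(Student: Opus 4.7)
The plan is to derive the stated decomposition by applying the same G\"unter-derivative and Helmholtz-equation based regularization that was used for $K'$ in Lemma~\ref{regKP} and for $N_1$ in Lemma~\ref{regN1}. The three basic ingredients are the splitting $T(\partial,\nu)=(\lambda+2\mu)\nu\,\nabla\cdot+\mu\partial_\nu+\mu M(\partial,\nu)$ with $M(\partial,\nu)=A\,d/ds$, the Helmholtz identity $\Delta_x\gamma_{k_t}(x,y)=-k_t^2\gamma_{k_t}(x,y)$ valid for $x\ne y$, and the translation-invariance $\nabla_y\gamma_{k_t}(x,y)=-\nabla_x\gamma_{k_t}(x,y)$. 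Because $\Gamma$ is a closed curve, any tangential derivative $d/ds_y$ that ends up on the kernel can be transferred onto the density $p$ by integration by parts without a boundary contribution.

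First I would isolate $N_2$ from the matrix $N$ by inserting the density $\varphi=(0,0,p)^\top$ in the definition of $N$ and retaining only the first two output components. Using the explicit last row of $\widetilde T^*(\partial_y,\nu_y)$ together with the block form of $E$, the result takes the shape
\[
N_2(p)(x)=T(\partial_x,\nu_x)\,\mathcal U(x)-\alpha\nu_x\,\mathcal V(x),
\]
where $\mathcal U$ is the two-vector integral with kernel $-\beta E_{11}(x,y)\nu_y+\tfrac{i\beta}{\omega\rho_f}\partial_{\nu_y}E_{21}(x,y)$ and $\mathcal V$ is the scalar integral with kernel $-\beta\nu_y^\top E_{12}(x,y)+\tfrac{i\beta}{\omega\rho_f}\partial_{\nu_y}E_{22}(x,y)$. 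Substituting the explicit $\gamma_{k_t}$-representations of $E_{11},E_{12},E_{21},E_{22}$ at this point makes all of the $\nabla_x\nabla_x^\top$-pieces and $\partial_{\nu_y}$-pieces visible.

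Next I would apply the G\"unter splitting of $T(\partial_x,\nu_x)$. The tangential summand $\mu M(\partial_x,\nu_x)=\mu A\,d/ds_x$ immediately pulls an outer factor $d/ds$ in front of the weakly-singular combination $2\mu E_{11}-\gamma_{k_s}I$ already exploited in $K_1^{\prime2}$ of Lemma~\ref{regKP}, producing the $d/ds\,N_2^3$ contribution. The divergence and normal-derivative summands of $T(\partial_x,\nu_x)$ act on the $\nabla_x\nabla_x^\top\gamma_{k_t}$-parts of $E_{11}$ and on the $\nabla_x\gamma_{k_t}$-parts of $E_{12},E_{21}$; here I would swap $\nabla_y\mapsto-\nabla_x$ on the inner derivatives and then use the Helmholtz identity to replace each resulting $\Delta_x\gamma_{k_t}$ by $-k_t^2\gamma_{k_t}$. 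After these substitutions the only surviving second derivatives reduce to structures of the form $\nu_x\nabla_x^\top[\cdot]\nu_y$ acting on differences of $\gamma_{k_t}$'s that vanish to one additional order at $x=y$; these are the weakly-singular pieces that build up $N_2^1$. The residual tangential-at-$y$ derivative, produced by splitting $\nabla_y=\nu_y\partial_{\nu_y}+\tau_y d/ds_y$ along $\Gamma$, is moved onto $p$ by integration by parts and, paired with the outer $d/ds_x$ coming from $M(\partial_x,\nu_x)$, assembles the sandwich $\frac{d}{ds}N_2^2\frac{d}{ds}$. The constants $C_0,C_1,C_2,C_3$ come from collecting the rational prefactors in $k_p,k_1,k_2,q$ and simplifying with the identities $k_1^2+k_2^2=q(1+\varepsilon)+k_p^2$ and $k_1^2k_2^2=qk_p^2$.

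The main obstacle will be the precise cancellation of the $|x-y|^{-2}$ pieces at the Helmholtz-substitution step. The four kernels $E_{11},E_{12},E_{21},E_{22}$ all contribute $\nabla\nabla^\top\gamma_{k_t}$-type terms with distinct rational coefficients, and only after these are carefully collected, combined with the $-\alpha\nu_x\mathcal V$ term, and simplified using the characteristic-equation identities above does the genuinely hyper-singular part disappear; a sign or coefficient slip at this stage destroys the weakly-singular character of $N_2^1$. The bookkeeping is tedious but structurally identical to the one already carried out for $N_1$ in Lemma~\ref{regN1}, so no essentially new analytic input is required.
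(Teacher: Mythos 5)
Your plan is correct and follows essentially the same route as the paper, which proves Lemma~\ref{regN1} in detail via the G\"unter splitting (\ref{eq.tra}) together with the identities (\ref{eq.G1})--(\ref{eq.G2}), the Helmholtz substitution $\Delta_x\gamma_{k_t}=-k_t^2\gamma_{k_t}$, and integration by parts along the closed curve, and then declares the proofs of Lemmas~\ref{regN2}--\ref{regN4} analogous. Your identification of the block $N_2(p)=T(\partial_x,\nu_x)\,\mathcal U-\alpha\nu_x\,\mathcal V$ and of the origins of the three pieces $N_2^1$, $\frac{d}{ds}N_2^2\frac{d}{ds}$, $\frac{d}{ds}N_2^3$ matches the paper's computation (the only cosmetic slip is your reference to the constants $C_0,\dots,C_3$, which belong to Lemmas~\ref{regKP} and~\ref{regN1} rather than to this statement).
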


\begin{lemma}
\label{regN3}
The hyper-singular operator $N_3$ can be expressed as
\be
N_{3}=N^1_3+\frac{d}{ds}N^2_3\frac{d}{ds}+N^3_3\frac{d}{ds},
\label{N3}
\en
where
\ben
N^1_3(u)(x)&=&- i\omega \beta \int_\Gamma  {\nu _x^{\top
		}{\nabla _x}\left[{\gamma _{{k_s}}}(x,y) - {\gamma _{{k_1}}}(x,y)-\frac{{ k_2^2 - q}}{{k_1^2 - k_2^2}}({\gamma _{{k_1}}}(x,y) - {\gamma _{{k_2}}}(x,y))\right]\nu _y^{\top}u(y)} d{s_y}\\
	&\quad&- i\omega \beta \int_\Gamma  \left[{\partial _{{\nu _y}}}{\gamma _{{k_s}}}(x,y) \nu _x^{\top} - i\omega \alpha \nu _x^{\top}{E_{21}}\nu _y^{\top}\right]u(y)d{s_y}\\
	&\quad&- \frac{{\beta \gamma }}{{{\rho _f}(k_1^2 - k_2^2)}}\int_\Gamma  {{\partial _{{\nu _x}}}\left[k_1^2{\gamma _{{k_1}}}(x,y) - k_2^2{\gamma _{{k_2}}}(x,y)\right]} \nu _y^{\top}u(y)d{s_y}\\
	&\quad&+ \frac{{i\omega \alpha }}{{k_1^2 - k_2^2}}\int_\Gamma  {\partial _{{\nu _x}}}\left[(k_p^2 - k_1^2){\gamma _{{k_1}}}(x,y) - (k_p^2 - k_2^2){\gamma _{{k_2}}}(x,y) \right]\nu _y^{\top}u(y)d{s_y},\\
	N^2_3(u)(x)&=&- \frac{{2\mu \beta \gamma }}{{{\rho _f}(\lambda  + 2\mu )(k_1^2 - k_2^2)}}\int_\Gamma  {\nabla _x^\top\left[{\gamma _{{k_1}}}(x,y) - {\gamma _{{k_2}}}(x,y)\right]u(y)} d{s_y}\\
	N^3_3(u)(x)&=&- i\omega \beta \int_\Gamma  {\nu _x^{\top}\left[2\mu {E_{11}}(x,y) - {\gamma _{{k_s}}}(x,y)I\right]A} u(y)d{s_y}\\
	&\quad&+ \frac{{2\mu \beta \gamma }}{{{\rho _f}(\lambda  + 2\mu )(k_1^2 - k_2^2)}}\int_\Gamma  {\left[k_1^2{\gamma _{{k_1}}}(x,y) - k_2^2{\gamma _{{k_2}}}(x,y)\right]\nu _x^{\top}Au(y)} d{s_y}.\\
\enn
\end{lemma}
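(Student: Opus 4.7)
My plan is to mimic the strategy used in Lemma~\ref{regKP} and the preceding Lemmas~\ref{regN1}--\ref{regN2}. I would first unfold $N_3$ by using the definition
\[
N_3(u)(x) = \Bigl(-i\omega\beta\nu_x^\top,\; \tfrac{i\beta}{\omega\rho_f}\partial_{\nu_x}\Bigr)\int_\Gamma \bigl(\widetilde T^*(\partial_y,\nu_y)E(x,y)\bigr)^\top \begin{bmatrix}u(y)\\ 0\end{bmatrix}ds_y,
\]
i.e.\ the pressure row of $\widetilde{T}(\partial_x,\nu_x)$ applied to the displacement column produced by $\widetilde T^*(\partial_y,\nu_y)$ acting on the first (vector) column of $E$. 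Substituting the explicit formulas for $E_{11}$ and $E_{21}$ in terms of $\gamma_{k_s},\gamma_{k_1},\gamma_{k_2}$ and expanding $\widetilde T^*$ from~(\ref{Tadjoint}), the integrand becomes a sum of pieces of the schematic forms $\partial_{\nu_x}\partial_{\nu_y}\gamma_{k_t}$, $\partial_{\nu_x}\nabla_y\gamma_{k_t}$, $\nu_x^\top\nabla_x\nabla_y\gamma_{k_t}$, together with lower-order terms that are already weakly singular.

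The next step is to apply the Günter decomposition recalled just before the lemma,
\[
T(\partial,\nu)v = (\lambda+2\mu)\nu(\nabla\!\cdot\! v) + \mu\,\partial_\nu v + \mu A\,\tfrac{dv}{ds},
\]
to every occurrence of $T(\partial_y,\nu_y)$ that hides inside $\widetilde T^*$. This isolates a genuine tangential-derivative piece $\mu A\,d/ds$, which on the closed smooth curve $\Gamma$ can be integrated by parts against the density $u$ with no boundary contribution. Transferring $d/ds$ onto $u$ produces the factor $A\,du/ds$ characterising the $N_3^3\frac{d}{ds}$ and $\frac{d}{ds}N_3^2\frac{d}{ds}$ terms. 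The remaining normal-derivative pieces $\partial_{\nu_y}\gamma_{k_t}$ and the divergence pieces $(\nabla\cdot)$ acting on $\nabla_x\gamma_{k_t}$ I would simplify by swapping $\nabla_y = -\nabla_x$ on the kernel and then using $(\Delta_x+k_t^2)\gamma_{k_t}(x,y)=0$ for $x\neq y$ to reduce any Laplacian down to a multiplication by $-k_t^2$; this is exactly what trades two derivatives of $\gamma_{k_t}$ for a weakly singular $\gamma_{k_t}$ kernel and is the mechanism by which the hyper-singular parts disappear into $N_3^1$.

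The main obstacle is the bookkeeping of the cross-terms generated by $\nu_x^\top\nabla_x\nabla_y\gamma_{k_t}$, where the most dangerous $O(|x-y|^{-2})$ singularities live. Here I would split $\nabla_y = \nu_y\partial_{\nu_y} + \tau_y\,d/ds_y$ (and similarly at $x$) to separate the normal/normal, normal/tangential, and tangential/tangential parts. The tangential/tangential part becomes $\frac{d}{ds_x}(\cdots)\frac{d}{ds_y}$ with a weakly singular kernel after integration by parts at both endpoints, contributing to $\frac{d}{ds}N_3^2\frac{d}{ds}$; the mixed terms contribute to $N_3^3\frac{d}{ds}$; and the normal/normal term is what generates the $\partial_{\nu_x}\bigl[k_1^2\gamma_{k_1}-k_2^2\gamma_{k_2}\bigr]$ and $\partial_{\nu_x}\bigl[(k_p^2-k_j^2)\gamma_{k_j}\bigr]$ contributions in $N_3^1$ after reducing the Laplacian via the Helmholtz identity. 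A careful combination of the coefficients $\beta$, $\alpha$, $\gamma$, $\rho_f$ and the cofactors $(k_p^2-k_j^2)/(k_1^2-k_2^2)$, $(k_j^2-q)/(k_1^2-k_2^2)$ from the decomposition of $E_{11}$ and $E_{21}$ then reproduces exactly the four expressions stated for $N_3^1, N_3^2, N_3^3$. The derivation is formally identical in flavour to that of $N_2$ in Lemma~\ref{regN2} (with the roles of $x$ and $y$ essentially transposed because the outer and inner traction operators are swapped), and it is this observation that I would use as a sanity check on the final constants.
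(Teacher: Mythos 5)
Your plan coincides with the paper's own derivation: the paper proves Lemma~\ref{regN1} by expanding the traction of $E_{11}$, $E_{12}$, $E_{21}$ via the G\"{u}nter identities (\ref{eq.G1})--(\ref{eq.G2}), reducing the resulting Laplacians to $-k_t^2$ multiplications through the Helmholtz equation, and integrating the $A\,d/ds$ pieces by parts on the closed curve, and it then explicitly declares the proofs of Lemmas~\ref{regN2}--\ref{regN4} to be analogous (with the outer and inner operators swapped, exactly as you note). Your normal/tangential splitting of $\nabla_y$ is merely a rephrasing of the identity $\partial_\nu\nabla - M(\partial,\nu)\nabla = \nu\Delta$ used in the paper, so the approach is essentially the same.
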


\begin{lemma}
\label{regN4}
The hyper-singular operator $N_4$ can be expressed as
\be
N_4=N^1_4+\frac{d}{ds}N^2_4\frac{d}{ds}+\frac{d}{ds}N^3_4,
\label{N4}
\en
where
\ben
	N^1_4(p)(x) &=& i\omega {\beta ^2}\int_\Gamma  {\nu _x^{\top}{E_{11}}(x,y)p(y)} d{s_y}+ \frac{{2i\beta\gamma\omega }}{{(\lambda  + 2\mu )}}C_4\int_\Gamma  {\left[k_1^2{\gamma _{{k_1}}}(x,y) - k_2^2{\gamma _{{k_2}}}(x,y)\right]\nu _x^{\top}{\nu _y}p(y)} d{s_y}\\
	N^2_4(p)(x) &=& \frac{{i\beta\gamma\omega }}{{(\lambda  + 2\mu )}}C_4\int_\Gamma  {\left[{\gamma _{{k_1}}}(x,y) - {\gamma _{{k_2}}}(x,y)\right]p(y)} d{s_y}\\
	&\quad&- C_4\int_\Gamma  {\left[(k_p^2 - k_1^2){\gamma _{{k_1}}}(x,y) - (k_p^2 - k_2^2){\gamma _{{k_2}}}(x,y)\right]p(y)} d{s_y}\\
	N^3_4(p)(x) &=& \frac{{i\beta\gamma\omega }}{{(\lambda  + 2\mu )}}C_4\int_\Gamma  {\nabla _x^\top[{\gamma _{{k_1}}}(x,y) - {\gamma _{{k_2}}}(x,y)]A} {\nu _y}p(y)d{s_y}\\
	&\quad& - C_{4}\int_\Gamma  {\left[(k_p^2 - k_1^2)k_1^2{\gamma _{{k_1}}}(x,y) - (k_p^2 - k_2^2)k_2^2{\gamma _{{k_2}}}(x,y)\right]\nu _x^{\top}{\nu _y}p(y)} d{s_y},\\
\enn
with
\ben
C_4=\frac{{i\beta }}{{{\rho _f}\omega (k_1^2 - k_2^2)}}.
\enn
\end{lemma}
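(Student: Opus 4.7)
The approach would follow the strategy already employed in the companion Lemmas \ref{regN1}--\ref{regN3} and developed in \cite{YHX17}. Starting from the definition, I would expand $N_4(p)(x)$ by writing out the third (pressure) row of $\widetilde{T}(\partial_x,\nu_x)$ acting on the double-layer potential with scalar density $p$, and the third row of $\widetilde{T}^*(\partial_y,\nu_y)$ acting on the corresponding column of the fundamental solution $E(x,y)$. Using the block structure of $E$, this produces exactly four scalar contributions under the integral sign: a term proportional to $\nu_x^\top E_{11}(x,y)\nu_y$, a term with $\nu_x^\top \partial_{\nu_y}E_{21}(x,y)$, a term with $\partial_{\nu_x}(\nu_y^\top E_{12}(x,y))$, and a term with $\partial_{\nu_x}\partial_{\nu_y}E_{22}(x,y)$, each multiplied by $p(y)$ and weighted by combinations of $\beta$, $\omega$, $\rho_f$.

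The next step is kernel simplification. Using the Helmholtz identity $\Delta_x\gamma_{k_t}(x,y) = -k_t^2\gamma_{k_t}(x,y)$ for $x\ne y$ together with the dispersion relations $k_1^2+k_2^2 = q(1+\epsilon)+k_p^2$ and $k_1^2k_2^2 = qk_p^2$, I would collapse the double gradients $\nabla_x\nabla_x^\top$ appearing in the explicit expressions of $E_{11}, E_{12}, E_{21}, E_{22}$ into linear combinations of $\gamma_{k_1}, \gamma_{k_2}, \gamma_{k_s}$ and of $k_i^2\gamma_{k_i}$. It is precisely this algebraic step that makes the strongly- and hyper-singular principal parts cancel between the four contributions above, leaving behind the weakly-singular differences $\gamma_{k_1}-\gamma_{k_2}$, $k_1^2\gamma_{k_1}-k_2^2\gamma_{k_2}$, and $(k_p^2-k_i^2)k_i^2\gamma_{k_i}$ seen in the statement, weighted by the single universal constant $C_4 = i\beta/(\rho_f\omega(k_1^2-k_2^2))$.

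The decisive regularization step is to eliminate the remaining explicit normal derivatives. Decomposing any gradient on $\Gamma$ as $\nabla = \nu\partial_\nu + \tau\, d/ds$ and using the G\"unter identity $T(\partial,\nu) = (\lambda+2\mu)\nu\nabla\cdot + \mu\partial_\nu + \mu A\, d/ds$ recalled just before Lemma \ref{regKP}, I would split each surviving normal-derivative term into a harmless piece already weakly singular (eventually absorbed into $N^1_4$) and a piece of the form $A\, d/ds$. Integration by parts on the closed curve $\Gamma$ (whose empty boundary kills all boundary terms) transfers a $d/ds_y$ onto the density $p(y)$, producing the $\frac{d}{ds}N^3_4$ contribution, while a $d/ds_x$ commutes through the integral to give the outer $d/ds$ of the sandwich $\frac{d}{ds}N^2_4\frac{d}{ds}$. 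The remaining kernels are weakly singular by inspection and assemble into $N^1_4$.

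The main obstacle will be the algebraic bookkeeping: one has to verify that every strongly- and hyper-singular contribution across the four expanded terms is expressible as the image of a tangential derivative, and that the weakly-singular residuals assemble into exactly the coefficients $C_4$, $k_i^2\gamma_{k_i}$ and $(k_p^2-k_i^2)k_i^2\gamma_{k_i}$ displayed in $N^1_4, N^2_4, N^3_4$. The absence of a $N^{\cdot}_4\frac{d}{ds}$ term in the decomposition (in contrast with $N_1$ and $N_3$) reflects the scalar nature of the density $p$: every tangential derivative that would have arisen in $y$ collapses, after integration by parts, into a single $d/ds$ applied to $p$, and these contributions are grouped inside $\frac{d}{ds}N^3_4$ rather than distributed across two separate terms.
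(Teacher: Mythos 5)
Your proposal is correct and follows essentially the same route as the paper, which in fact omits the detailed proof of this lemma and only remarks that it is ``analogous'' to that of Lemma~\ref{regN1}: expand $N_4$ into the four kernel contributions $\nu_x^{\top}E_{11}\nu_y$, $\nu_x^{\top}\partial_{\nu_y}E_{21}$, $\partial_{\nu_x}E_{12}^{\top}\nu_y$, $\partial_{\nu_x}\partial_{\nu_y}E_{22}$, reduce second derivatives of the $\gamma_{k_t}$ via the Helmholtz equation and the identities (\ref{eq.G1})--(\ref{eq.G2}), and convert the remaining G\"unter-derivative terms $M(\partial,\nu)=A\,d/ds$ into tangential derivatives by integration by parts on the closed curve. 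The only slight imprecision is attributing the taming of the hyper-singular part of $\partial_{\nu_x}\partial_{\nu_y}E_{22}$ to cancellation among the four terms --- it is instead handled by the Maue-type consequence of (\ref{eq.G1}) --- but you correctly identify that step as the decisive one, so the argument goes through.
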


\begin{remark}
It can be verified that the operators $K_j^{\prime i},i=1,2,j=1,\cdots,4$, $N_j^i, i=1,\cdots,3,j=1,\cdots,4$ and $N_1^4$ are all weakly-singular for smooth boundary $\Gamma$. The derived results of the regularized formulation for the integral operators $K'$ and $N$ can also be extended to the case of Lipshitz boundary in terms of the properties of G\"unter derivative given in~\cite{BT16}.
\end{remark}

\section{Numerical implementation: Nystr\"om method}
\label{sec:4}

According to the regularized formulations of integral operators $K'$ and $N$ given in Section~\ref{sec:3.3}, the numerical implementation of the RBIE (\ref{BIER}) can be converted into the evaluation of multiple operators of two types, (i) Integral operators of the form
\be
\mathcal{H}(\varphi)(x) = \int_\Gamma  {H(x,y)\varphi (y)} d{s_y}
\label{kernelH}
\en
in which the kernel $H(x,y)$ is weakly-singular, and (ii) Tangential derivative $d/ds$ of a given smooth function defined on $\Gamma$. This section presents algorithms for numerical evaluation of operators of these types by utilizing the well-known Nystr\"om method~\cite{CK98}.

Assume that the boundary curve $\Gamma$ is analytic and is given through
\be
\Gamma = \{x(t)=(x_1(t),x_2(t)):0\leq t\leq 2\pi\}
\label{boundary curve}
\en
in counterclockwise orientation where $x:\mathbb{R}\rightarrow \mathbb{R}^{2}$ is analytic and 2$\pi$-periodic with $\left | x'(t) \right |>0$ for all $t$. Then the outward unit normal at $x\in\Gamma$ is given by $\nu_t=\nu_{x(t)}=\frac{(x_2^{'}(t),-x_1^{'}(t))}{\left | x'(t) \right |}$ and the integral (\ref{kernelH}) can be transformed into a parametric form
\be
\mathcal{H}(\varphi)(x(t))=\int_0^{2\pi } {\widetilde H(t,\tau )\psi (\tau )} d\tau,
\label{parametric form}
\en
where $\widetilde H(t,\tau )=H(x(t),x(\tau))|x'(\tau)|$ and $\psi (t): = \varphi (x(t))$. For the weakly-singular kernel $\widetilde H(t,\tau )$, the Nystr\"om method requires splitting it into
\be
\widetilde H(t,\tau)=\widetilde H_1(t,\tau)\ln (4\sin^{2}\frac{t-\tau}{2})+\widetilde H_2(t,\tau),
\label{split kernel}
\en
where the terms $\widetilde H_1$, $\widetilde H_2$ are analytic. Choosing an equidistant mesh $t_j:=\frac{j\pi}{n}, j=0,\cdots,2N-1  (N\in \mathbb{N})$, the quantities  $H(\varphi)(x(t_i)), i=0,\cdots,2N-1$ can be approximated with exponential decaying errors using the weighted trigonometric interpolation quadratures and the trapezoidal rule, that is,
\ben
\mathcal{H}(\varphi)(x(t_i))\approx \sum_{j=0}^{2N-1}R_j^{(n)}(t_i)\widetilde H_1(t_i,t_j)\psi(t_j)+ \frac{\pi}{N}\sum_{j=0}^{2N-1}H_2(t_i,t_j)\psi(t_j).
\enn
For the regularized integral equation (\ref{BIER}), it can be summarized from the regularized formulations derived in Section~\ref{sec:3.3} that the weakly-singular kernel $\widetilde{H}(t,\tau)$ only takes four types of forms that are listed as follows, in which $c_e=0.57721566\cdots$ denotes the Euler constant, together with the corresponding splitting terms $\widetilde{H}_1(t,\tau)$, $\widetilde{H}_2(t,\tau)$.
\begin{itemize}
\item {\bf Type 1.}
\ben
\widetilde H(t,\tau) &=& H_0^{(1)}(k|x(t) - x(\tau )|)\left|x'(\tau)\right|, \\
\widetilde H_1(t,\tau) &=& \begin{cases}
\frac{i}{\pi }{J_0}(k\left| {x(t) - x(\tau )} \right|)\left|x'(\tau)\right|, & t\ne\tau, \cr
\frac{i}{\pi }\left|x'(t)\right|, & t=\tau,
\end{cases}\\
\widetilde H_2(t,\tau) &=& \begin{cases}
\widetilde H(t,\tau) - \widetilde H_1(t,\tau)\ln (4{\sin ^2}(\frac{{t - \tau }}{2})), & t\ne\tau, \cr
(1+\frac{2i}{\pi }(c_e+\ln\frac{k\left | {x(\tau )}^\prime \right |}{2}))\left|x'(t)\right|, & t=\tau.
\end{cases}
\enn
\item {\bf Type 2.}
\ben
&& \widetilde H(t,\tau) = \frac{\hat{k}H_1^{(1)}(\hat{k}|x(t) - x(\tau )|)-\tilde{k}H_1^{(1)}(\tilde{k}|x(t) - x(\tau )|)}{|x(t) - x(\tau )|}\left|x'(\tau)\right|,\\
&&\widetilde H_1(t,\tau) = \begin{cases}
\frac{i}{\pi}\frac{\hat{k}J_1(\hat{k}|x(t) - x(\tau )|)-\tilde{k}J_1(\tilde{k}|x(t) - x(\tau )|)}{\left| x(t)-x(\tau)\right|}\left|x'(\tau)\right|, & t\ne\tau, \cr
\frac{1}{2\pi }(k_1^2-k_2^2)\left|x'(t)\right|, & t=\tau,
\end{cases}\\
&&\widetilde H_2(t,\tau) = \begin{cases}
\widetilde H(t,\tau) - \widetilde H_1(t,\tau)\ln (4{\sin ^2}(\frac{{t - \tau }}{2})), & t\ne\tau, \cr
\left[\frac{k_{1}^{2}-k_{2}^{2}}{2}(1+\frac{2i}{\pi}(\ln\left | x^\prime(t) \right |+c_e)-\frac{i}{\pi})+\frac{2i}{\pi}\left(k_{1}^{2} \ln\frac{k_{1}}{2}-k_{2}^{2}\ln\frac{k_{2}}{2}\right)\right]\left|x'(t)\right|, & t=\tau.
\end{cases}
\enn
\item {\bf Type 3.}
\ben
&&\widetilde H(t,\tau) = \frac{\hat{k}^2H_2^{(1)}(\hat{k}|x(t) - x(\tau )|)-\tilde{k}^2H_2^{(1)}(\tilde{k}|x(t) - x(\tau )|)}{|x(t) - x(\tau )|^2}\left|x'(\tau)\right|,\\
&&\widetilde H_1(t,\tau) = \begin{cases}
\frac{i}{\pi}\frac{\hat{k}^2J_2(\hat{k}|x(t) - x(\tau )|)-\tilde{k}^2J_2(\tilde{k}|x(t) - x(\tau )|)}{\left| x(t)-x(\tau)\right|^2}\left|x'(\tau)\right|, & t\ne\tau \cr
0, & t=\tau
\end{cases}\\
&&\widetilde H_2(t,\tau) = \begin{cases}
\widetilde H(t,\tau) - \widetilde H_1(t,\tau)\ln (4{\sin ^2}(\frac{{t - \tau }}{2})), & t\ne\tau, \cr
-\frac{i}{\pi}(k_1^2-k_2^2)\left|x'(t)\right|, & t=\tau.
\end{cases}
\enn
\item {\bf Type 4.}
\ben
\widetilde H(t,\tau) &=& \frac{ik}{2}\nu_t\cdot \left[x(\tau)-x(t)\right] \frac{H_1^{(1)}(k\left|x(t)-x(\tau)\right|)}{\left|x(t)-x(\tau)\right|} \frac{\left|x'(\tau)\right|}{\left|x'(t)\right|},\\
\widetilde H_1(t,\tau) &=& \begin{cases}
	-\frac{k}{2\pi}\nu_t\cdot \left[x(\tau)-x(t)\right]\frac{J_1(k\left|x(t)-x(\tau)\right|)}{\left|x(t)-x(\tau)\right|}\frac{\left|x'(\tau)\right|}{\left|x'(t)\right|}, & t\ne\tau, \cr
	0, & t=\tau,
\end{cases}\\
\widetilde H_2(t,\tau) &=& \begin{cases}
	\widetilde H(t,\tau) - \widetilde H_1(t,\tau)\ln (4{\sin ^2}(\frac{{t - \tau }}{2})), & t\ne\tau, \cr
	\frac{1}{2\pi}\frac{\nu_t\cdot x^{\prime\prime}(t)}{\left|x'(t)\right|^2}, & t=\tau.
\end{cases}
\enn
\end{itemize}

Finally, for a given periodic smooth function $\varphi(x(t)), t\in[0,2\pi]$ with values $\varphi(x(t_j)), j=0,\cdots,2N-1$, applying the Fourier series approximation
\ben
\varphi(x(t))=\sum\limits_{m=-N}^{N-1} \varphi_m e^{imt}, \quad  t\in[0,2\pi],
\enn
which can be obtained using FFT, we can compute the tangential derivative of $\varphi$ at $x(t_j),j=0,\cdots,2N-1$ as
\ben
\frac{d\varphi}{ds}\Big|_{x=x(t_j)}=\frac{1}{|x'(t_j)|}\sum\limits_{m=-N}^{N-1}im \varphi_m e^{imt_j}
\enn
which can also be evaluated by FFT.

\section{Numerical experiments}
\label{sec:5}

In this section, various numerical examples are presented to demonstrate the accuracy and efficiency of the proposed RBIE method for solving the poroelastic problems in two dimensions. We always choose the values of parameters in (\ref{model}) as $\mu=1$, $\nu_p=0.32$, $\nu_u=0.44$, $B=0.9$, $\phi =0.25$, $\rho_f=1$, $\rho_s=3$, $\rho_a=0.15$, $\kappa=0.5$. The fully complex version of the iterative solver GMRES would be utilized to produce the solutions of the integral equations. The relative maximum errors presented in this section are calculated in accordance with the expression
\be
\varepsilon_\infty :=\frac{\max_{x\in S}\left\{ \left|U^{\rm{num}}(x)-U^{\rm{exa}}(x)\right| \right\}}{\max_{x\in S}\left\{ \left|U^{\rm{exa}}(x)\right| \right\}},
\en
where $S=\{|x|=6,x\in\R^2\}\subset \Omega^c$ and $U^{\rm{exa}}$ is the exact solution. All of the numerical tests are obtained using MATLAB.

\begin{table}[htb]
	\centering
	\caption{Numerical errors of solutions for the problem of poroelastic scattering by a kite-shaped or rounded-triangle-shaped obstacle. GMRES tol: $10^{-12}$.}
	\begin{tabular}{|c|c|c|c|c|c|}
		\hline
		\multirow{2}{*}{$\omega$} & \multirow{2}{*}{$N$} & \multicolumn{2}{|c|}{Boundary (\ref{RTS})} & \multicolumn{2}{|c|}{Boundary (\ref{KS})} \\
		\cline{3-6}
		& & CBIE & RBIE & CBIE & RBIE \\
		\hline
		&  20 &$1.45\times10^{-2}$& $6.10\times10^{-3}$ &$2.15\times10^{-3}$ &$1.43\times10^{-3}$   \\
		1 & 40 &$5.68\times10^{-6}$& $1.46\times10^{-6}$  &$1.39\times10^{-7}$ &$1.35\times10^{-7}$   \\
		& 80 & $1.28\times10^{-12}$  & $1.24\times10^{-12}$  &$9.98\times10^{-13}$ &$4.64\times10^{-13}$   \\
		\hline
		&  100 &$2.87\times10^{-2}$& $1.29\times10^{-2}$&$1.42\times10^{-2}$ &$2.10\times10^{-2}$    \\
		10 & 150  & $1.41\times10^{-3}$  & $1.06\times10^{-5}$ &$1.00\times10^{-5}$ &$5.06\times10^{-6}$   \\
		& 200 &$7.04\times10^{-9}$  & $1.70\times10^{-11}$  &$1.03\times10^{-12}$ &$1.08\times10^{-12}$   \\
		\hline
	\end{tabular}
	\label{Example1.1}
\end{table}

\begin{figure}[htb]
	\centering
	\begin{tabular}{ccc}
		\includegraphics[scale=0.15]{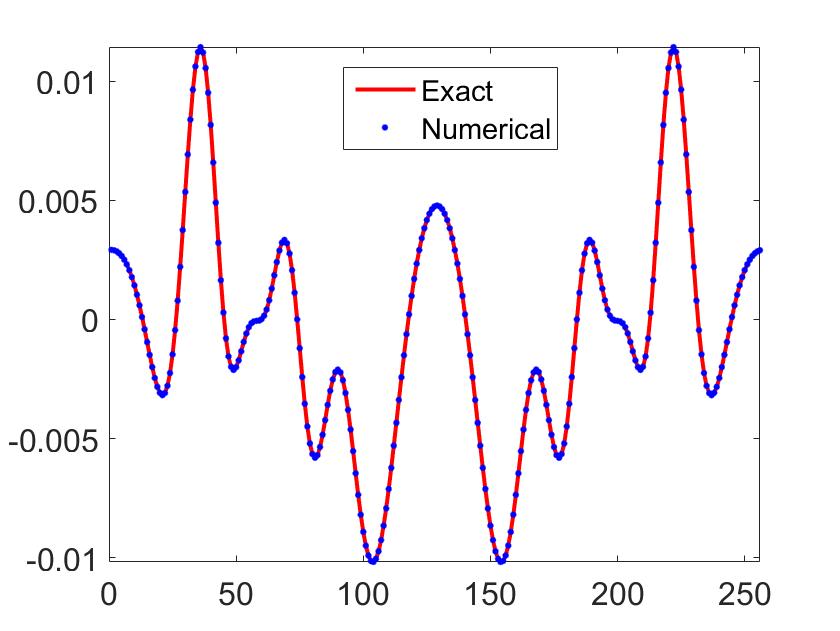} &
		\includegraphics[scale=0.15]{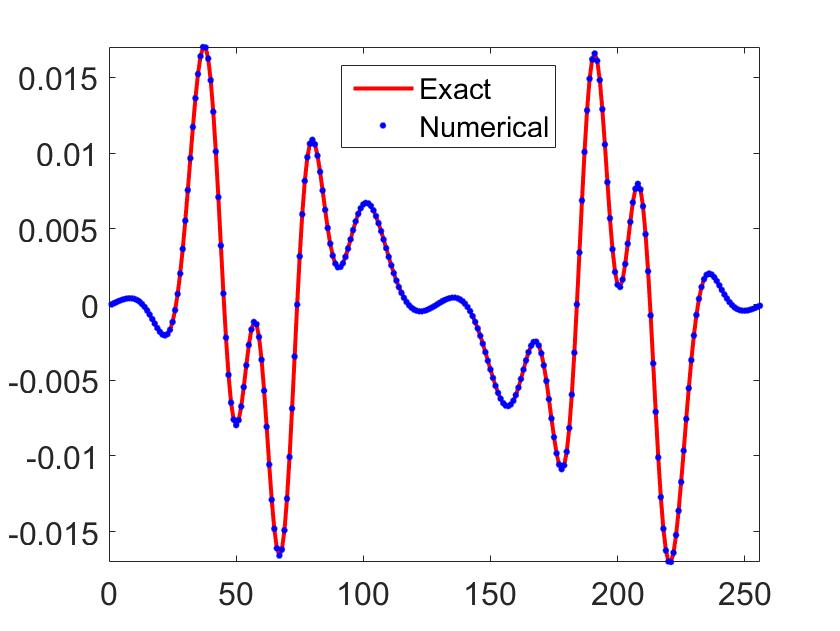} &
		\includegraphics[scale=0.15]{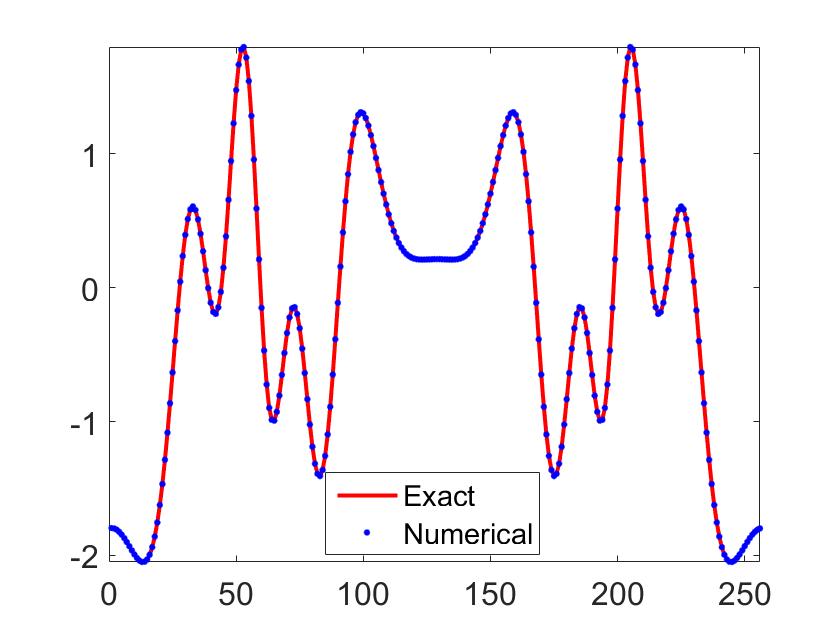} \\
		(a) $\mbox{Re}(u_1)$ & (b) $\mbox{Re}(u_2)$ & (c) $\mbox{Re}(p)$ \\
		\includegraphics[scale=0.15]{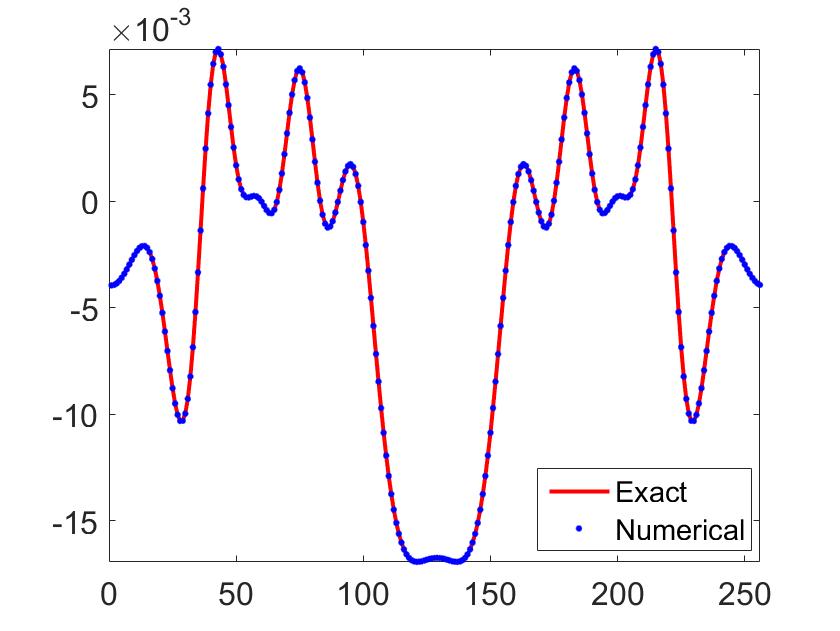} &
		\includegraphics[scale=0.15]{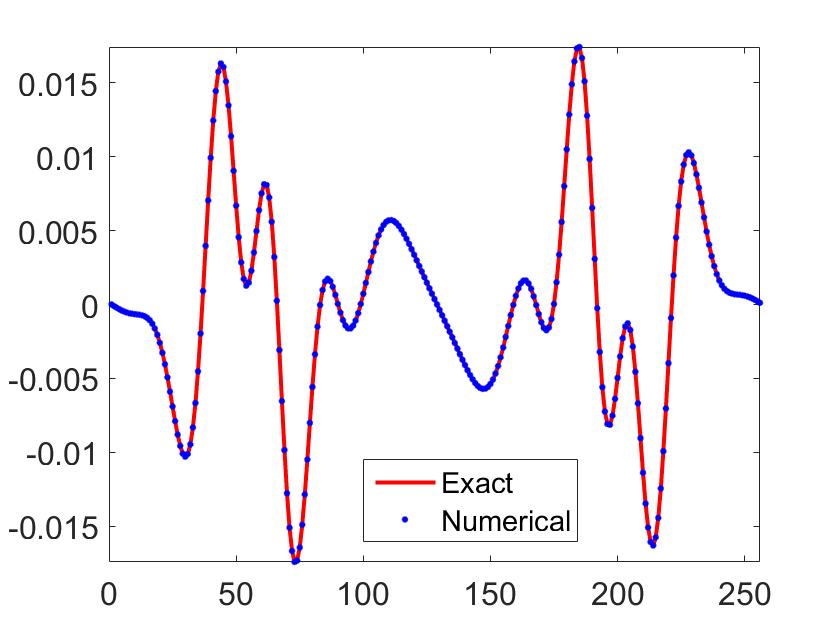} &
		\includegraphics[scale=0.15]{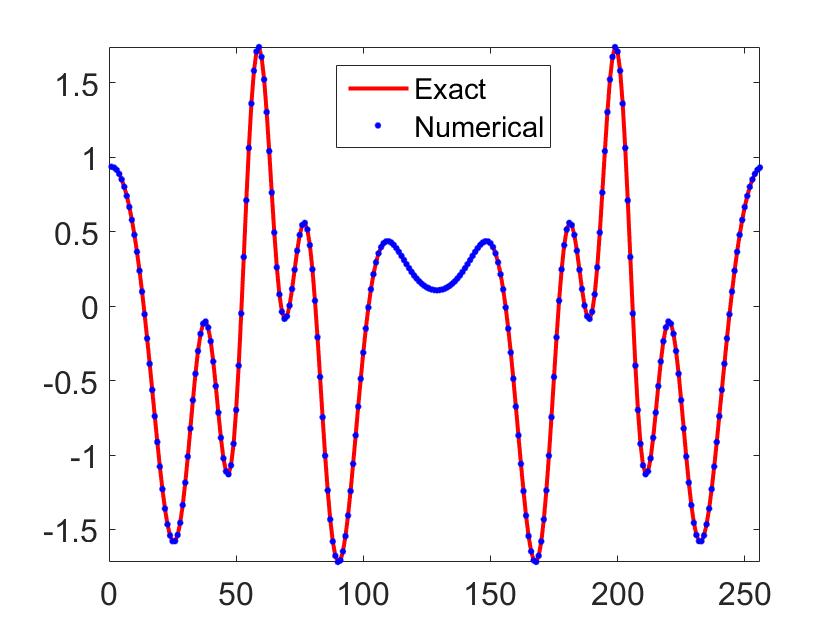} \\
		(d) $\mbox{Im}(u_1)$ & (e) $\mbox{Im}(u_2)$ & (f) $\mbox{Im}(p)$ \\
	\end{tabular}
	\caption{Comparison of the exact and numerical solutions with $N=256$ for $\omega=20$, and the relative error is $4.76\times10^{-5}$.}
	\label{Example1.2}
\end{figure}

In our first example, we consider a kite-shaped or rounded-triangle-shaped obstacle with boundary $\Gamma$ characterized by
\be
\label{RTS}
x(t)=(2+ 0.5\cos 3t)(\cos t,\sin t), \qquad t \in \left[0, 2\pi \right),
\en
and
\be
\label{KS}
x(t)=(\cos t+0.65\cos 2t-0.65,1.5\sin t), \qquad t \in \left[0, 2\pi \right),
\en
respectively, and the exact solution is given by
\begin{equation*}
u(x)=E_{12}(x,z), \qquad p(x)=E_{22}(x,z), \qquad x\in \Omega^c,
\end{equation*}
with $z=(0,0.5)^\top\in\Omega$. Table~\ref{Example1.1} displays the numerical errors of solutions using the un-regularized integral equation (\ref{BIE1}) or the regularized integral equation (\ref{BIER}) with respect to $N$, which demonstrate the high accuracy and rapid convergence of newly proposed method. The numerical and exact solutions on $S$ for the cases of kite-shaped obstacle, $N=256$, $\omega=20$ are presented in Figure~\ref{Example1.2} and the relative maximum error is $4.76\times10^{-5}$.

\begin{table}[htb]
	\centering
	\caption{Iterations required for the problem of poroelastic scattering by a kite-shaped or rounded-triangle-shaped obstacle. GMRES tol: $10^{-5}$.}
	\begin{tabular}{|c|c|c|c|c|c|}
		\hline
		\multirow{2}{*}{$\omega$} & \multirow{2}{*}{$N$} & \multicolumn{2}{|c|}{Boundary (\ref{RTS})} & \multicolumn{2}{|c|}{Boundary (\ref{KS})} \\
		\cline{3-6}
		& & CBIE & RBIE & CBIE & RBIE \\
		\hline
		1&  20 &38& 22 &48 &17   \\
		5&  100 &132& 52 &242 &39   \\
		10 & 200 &285&89 &468 &66   \\
		20& 400 & 590  &175   & 781&126   \\
		30& 600 & 893  &234   & 1005& 171  \\
		\hline
	\end{tabular}
	\label{Example2.1}
\end{table}

\begin{table}[htb]
	\centering
	\caption{Iterations required by various integral equations for the problem of poroelastic scattering by a kite-shaped obstacle. GMRES tol: $10^{-5}$.}
	\begin{tabular}{|c|c|c|c|c|}
		\hline
		$\omega$ & $N$ & Eqn.~(\ref{BIE2}) &
		RBIE with $\mathcal{R}=S$ & RBIE with $\mathcal{R}=S_0$\\
		\hline
		1 &  20 &16  &15   &17   \\
		5 & 100 &44  &36   &39   \\
		10& 200 &84  &64   &66   \\
		20& 400 &170 &133  &126  \\
		30& 600 &248 &189  &171  \\
		\hline
	\end{tabular}
	\label{Example2.2}
\end{table}

Table~\ref{Example2.1} presents the iteration numbers required to achieve the GMRES tolerance $10^{-5}$ in solving the boundary integral equations (\ref{BIE1}) and (\ref{BIER}). It can be seen that the regularized integral equation (\ref{BIER}) requires much smaller number of iterations than the un-regularized integral equation (\ref{BIE1}). It is mentioned in Remark~\ref{BIEremark} that the single-layer potential (\ref{singlelayer}) can be used to represent the solution and the integral equation (\ref{BIE2}) is obtained for the unknown potential. In addition, according to the
Caldr\'on relation studied in Section~\ref{sec:3.1}, the regularized operator $\mathcal{R}$ can also be selected as the single-layer operators $S$, and the resulted regularized boundary integral equation enjoys the same spectral property satisfied by (\ref{BIER}). Table~\ref{Example2.2} lists the iteration numbers required by various integral equations to achieve the GMRES tolerance $10^{-5}$ and it shows that the regularized integral equations (\ref{BIER}) with $\mathcal{R}=S_0$ or $\mathcal{R}=S$ require almost the same number of iterations which is smaller than that required by equation (\ref{BIE2}). In Figure~\ref{Example2.3}, we plot the determinants of the stiffness matrixes resulting from the discretization of the single-layer operator $S$ and the regularized integral equation with $\mathcal{R}=S$ or $\mathcal{R}=S_0$. It can be seen that significant troughs appear at the same frequencies in Figure~\ref{Example2.3}(a,b) which generally shows the existence of interior (Dirichlet) eigenfrequencies, however, the selection of $\mathcal{R}=S_0$ can ensures the unique solvability of (\ref{BIER}) for all frequencies from numerical point of view to a certain context.

\begin{figure}[htb]
	\centering
	\begin{tabular}{ccc}
		\includegraphics[scale=0.15]{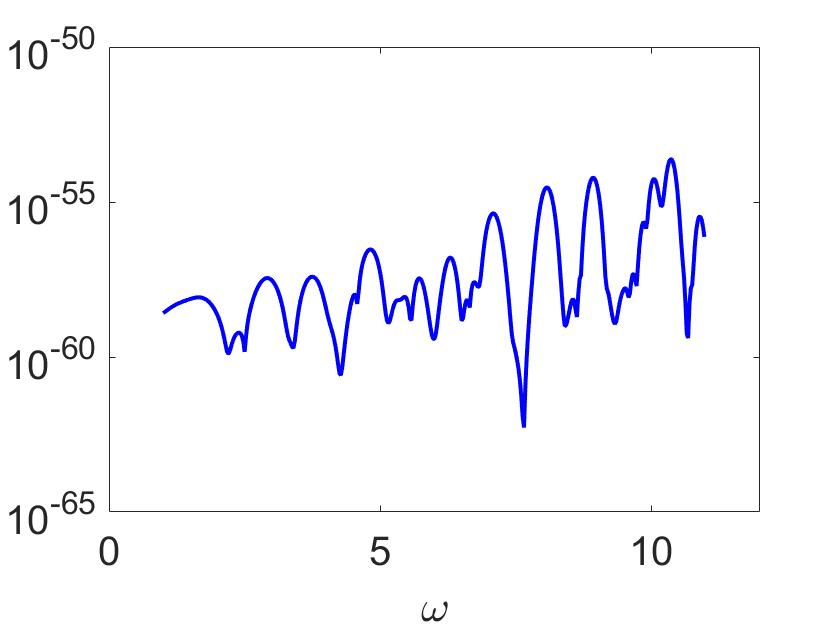} &
		\includegraphics[scale=0.15]{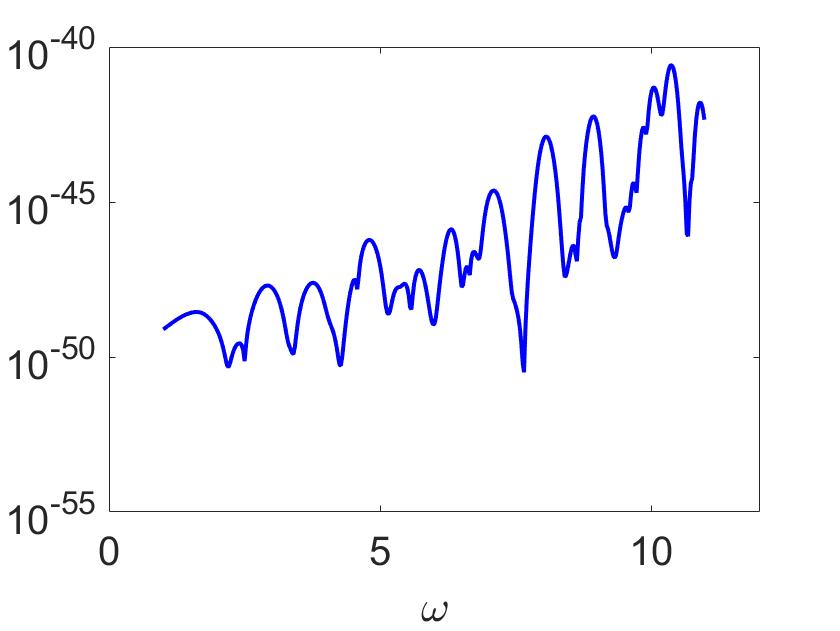} &
		\includegraphics[scale=0.15]{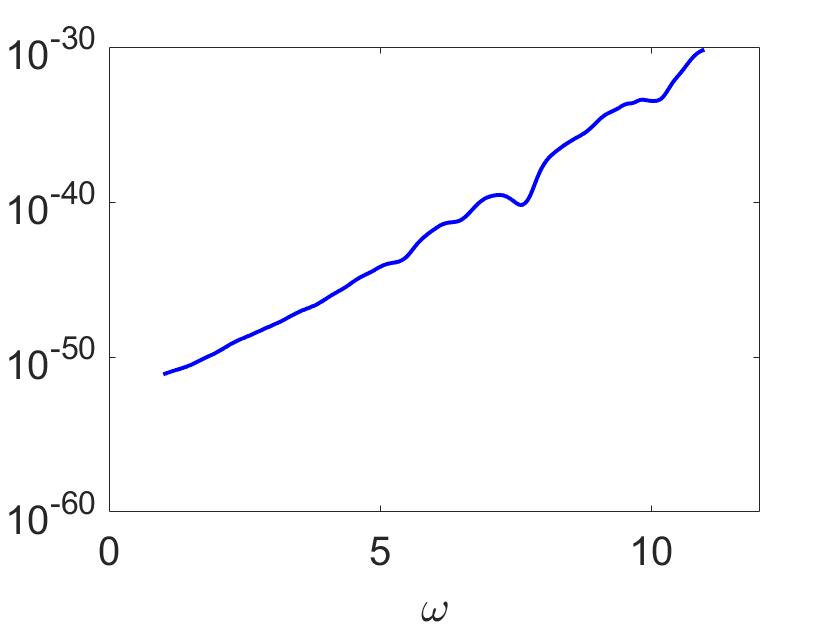} \\
		(a) $S$ & (b) Eqn.~(\ref{BIER}) with $\mathcal{R}=S$ & (c) Eqn.~(\ref{BIER}) with $\mathcal{R}=S_0$  \\
	\end{tabular}
	\caption{The determinant of the stiffness matrix with respect to frequency.}
	\label{Example2.3}
\end{figure}

Finally, we consider the scattering of an incident point source $U^{inc}=({u^{inc}}^\top,p^{inc})^\top$ in the form
\begin{equation*}
u^{inc}(x)=E_{12}(x,z),\qquad p^{inc}(x)=E_{22}(x,z)
\end{equation*}
by a kite-shaped obstacle where $z$ denotes the location of the point source. The numerical solutions in $\Omega^c$ with $\omega=20$ are presented in Figures~\ref{Example3.1} for $z=(3,3)$ and $z=(-3,0)$, respectively.

\begin{figure}[htb]
	\centering
	\begin{tabular}{cc}
		\includegraphics[scale=0.3]{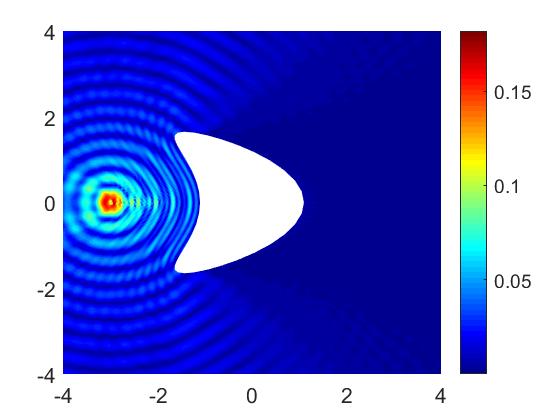} &
		\includegraphics[scale=0.3]{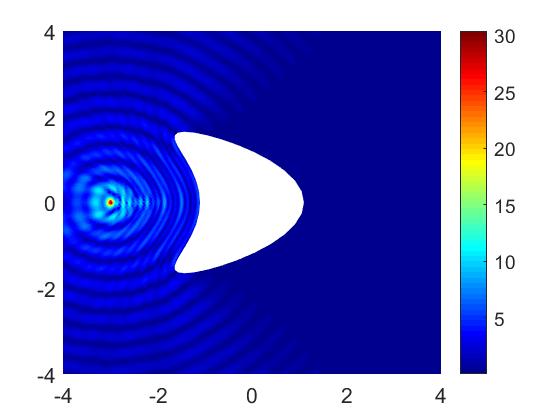} \\
		(a) $|u|$ & (b) $|p|$ \\
		\includegraphics[scale=0.3]{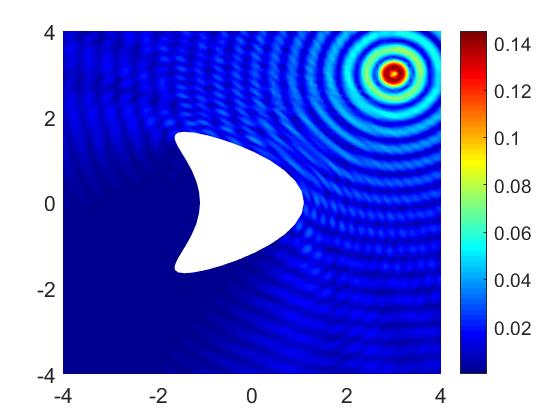} &
		\includegraphics[scale=0.3]{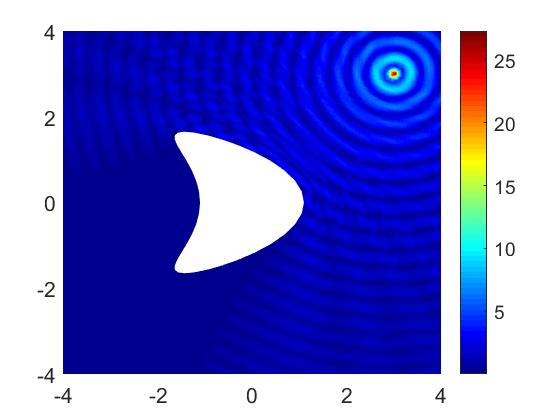} \\
		(d) $|u|$ & (e) $|p|$ \\
	\end{tabular}
	\caption{The total field $U$ for the scattering of an incident point source by a kite-shaped obstacle. (a,b): $z=(-3,0)$; (c,d): $z=(3,3)$.}
	\label{Example3.1}
\end{figure}

\section{Conclusion}
\label{sec:6}

We propose in this work a novel regularized integral equation method for solving the exterior Neumann boundary value problem of poroelastic wave scattering in two dimensions. On a basis of the spectral properties of poroelastic double-layer integral operator and the corresponding Caldr\'on relation, and the new regularized formulations of strongly-singular and hyper-singular operators, the proposed approach possesses spectral convergence (using Nystr\"om discretization method) and reduces the number of GMRES iterations consistently across various geometries and frequency regimes. Theoretical analysis of the unique solvability of combined integral equation (\ref{BIE1}), the combination of the regularized integral equation method with other popular fast solvers such as fast multipole method, the application of the regularized integral equation method for three dimensional problems and solving problems with non-smooth surfaces are left for future work.

\section*{Acknowledgement}
LWX is partially supported by Key Project of the Major Research Plan of NSFC (91630205) and a grant of NSFC (11771068 ).

\appendix
\section*{Appendix A. Proofs of Lemmas~\ref{regKP}-\ref{regN4}}
\renewcommand{\theequation}{A.\arabic{equation}}

This appendix presents the main idea for the proofs of Lemmas~\ref{regKP}-\ref{N4}. Recall that the traction operator can be rewritten as
\be
T(\partial ,\nu )u(x) = (\lambda  + 2\mu )\nu (\nabla  \cdot u) + \mu {\partial _\nu }u + \mu M(\partial ,\nu )u.
\label{eq.tra}
\en
It follows that
\be
{\partial _\nu }\nabla  - M(\partial ,\nu )\nabla  = \nu \Delta,
\label{eq.G1}
\en
and
\be
T(\partial ,\nu )\nabla  = (\lambda  + 2\mu )\nu \Delta  + 2\mu M(\partial ,\nu )\nabla.
\label{eq.G2}
\en

{\it Proof of Lemma~\ref{regKP}.}
We first investigate the operator $K'$. It is known from~\cite{BXY191} that for $x \ne y$,
\be
T(\partial_x,\nu_x)E_{11}(x,y) &=& - \nu _x\nabla _x^\top\left[ \gamma_{k_s}(x,y) - \gamma _{k_1}(x,y) \right] + \frac{k_2^2 - q}{k_1^2 - k_2^2}\nu _x\nabla _x^\top\left[ \gamma _{k_1}(x,y) - \gamma _{k_2}(x,y) \right]\\ \nonumber
&\quad& + \partial _{\nu _x}\gamma _{k_s}(x,y)I + M(\partial_x ,\nu_x)\left[ 2\mu E_{11}(x,y) - \gamma _{k_s}(x,y)I \right],
\label{eq.TEx}
\en
and
\be
T(\partial_y,\nu_y)E_{11}(x,y) &=& - \nu _y\nabla _y^\top\left[ \gamma_{k_s}(x,y) - \gamma _{k_1}(x,y) \right] + \frac{k_2^2 - q}{k_1^2 - k_2^2}\nu _y\nabla _y^\top\left[ \gamma _{k_1}(x,y) - \gamma _{k_2}(x,y) \right]\\ \nonumber
&\quad& + \partial _{\nu _y}\gamma _{k_s}(x,y)I + M(\partial_y ,\nu_y)\left[ 2\mu E_{11}(x,y) - \gamma _{k_s}(x,y)I \right].
\label{eq.TEy}
\en
Then it can be obtained that
\ben
K'_1(u)(x) &=& \int_\Gamma \nu _x\nabla _x^\top \left[- (\gamma _{k_s}(x,y) - \gamma _{k_1}(x,y)) + \frac{k_2^2 - q}{k_1^2 - k_2^2}(\gamma _{k_s}(x,y) - \gamma _{k_1}(x,y))\right]u(y)ds_y\\
&\quad& + \int_\Gamma  \left[\partial _{\nu _x}\gamma _{k_s}(x,y)I - \alpha \nu _xE_{12}^\top(x,y) +\frac{d}{ds_x}A(2\mu E_{11}(x,y) - \gamma _{k_s}(x,y)I) \right]u(y)ds_y.
\enn
It follows from (\ref{eq.G2}) that
\ben
T(\partial_x,\nu_x)E_{21}(x,y) &=& -\frac{\gamma}{(\lambda+2\mu)(k_1^2-k_2^2)}T(\partial_x,\nu_x)\nabla_x \left[\gamma_{k_1}(x,y)-\gamma_{k_2}(x,y)\right]\\
&=&-\frac{\gamma}{(k_1^2-k_2^2)}\nu_x \Delta_x \left[\gamma_{k_1}(x,y)-\gamma_{k_2}(x,y)\right]\\
&\quad& -\frac{2\mu\gamma}{(\lambda+2\mu)(k_1^2-k_2^2)} M(\partial_x ,\nu_x)\nabla_x\left[\gamma_{k_1}(x,y)-\gamma_{k_2}(x,y)\right]\\
&=&\frac{\gamma}{(k_1^2-k_2^2)}\left[k_1^2\gamma_{k_1}(x,y) -k_2^2\gamma_{k_2}(x,y)\right]\nu_x\\
&\quad& -\frac{2\mu\gamma}{(\lambda+2\mu)(k_1^2-k_2^2)} A\frac{d}{ds_x}\nabla_x\left[\gamma_{k_1}(x,y)-\gamma_{k_2}(x,y)\right].
\enn
Therefore,
\ben
K'_2(p)(x) &=&\int_\Gamma  \left[\frac{\gamma}{k_1^2 - k_2^2} (k_1^2\gamma _{k_1}(x,y)-k_2^2\gamma _{k_2}(x,y)) - \alpha \nu _xE_{22}(x,y)\right]p(y)ds_y\\
&\quad&-\int_\Gamma\frac{2\mu\gamma}{(\lambda+2\mu)(k_1^2-k_2^2)}A\frac{d}{ds_x}\nabla_x\left[\gamma_{k_1}(x,y)-\gamma_{k_2}(x,y)\right]p(y)ds_y.
\enn
On the other hand, we obtain from (\ref{eq.G1}) that
\ben
\partial_{\nu_x}E_{12}^\top(x,y) &=& \frac{i\omega \gamma}{(\lambda+2\mu)(k_1^2-k_2^2)}\partial_{\nu_x}\nabla_x^\top\left[\gamma_{k_1}(x,y)-\gamma_{k_2}(x,y)\right]\\
&=&\frac{i\omega \gamma}{(\lambda+2\mu)(k_1^2-k_2^2)}\nu_x \Delta_x\left[\gamma_{k_1}(x,y)-\gamma_{k_2}(x,y)\right]\\
&\quad&+\frac{i\omega \gamma}{(\lambda+2\mu)(k_1^2-k_2^2)}\left\{M(\partial_x,\nu_x)\nabla_x \left[\gamma_{k_1}(x,y)-\gamma_{k_2}(x,y)\right] \right\}^\top\\
&=&-\frac{i\omega\gamma}{(\lambda+2\mu)(k_1^2-k_2^2)}\left[k_1^2\gamma_{k_1}(x,y)-k_2^2\gamma_{k_2}(x,y)\right]\nu_x\\
&\quad&-\frac{i\omega\gamma}{(\lambda+2\mu)(k_1^2-k_2^2)}\frac{d}{ds_x}\nabla_x^\top\left[\gamma_{k_1}(x,y)-\gamma_{k_2}(x,y)\right]A.
\enn
Then we have
\ben
K^{\prime}_3(u)(x) &=& \int_\Gamma\left[-i\omega\beta\nu_x^\top E_{11}(x,y) + C_0(k_1^2\gamma _{k_1}(x,y) - k_2^2\gamma _{k_2}(x,y))\nu _x^\top\right] u(y)ds_y\\
&\quad& + \int_\Gamma C_0\frac{d}{ds_x}\nabla _x^\top\left[\gamma _{k_1}(x,y) - \gamma_{k_2}(x,y)\right]Ap(y)ds_y.
\enn
The formula for $K_4'$ can be obtained directly from its definition and this completes the proof of Lemma~\ref{regKP}.

{\it Proof of Lemmas~\ref{regN1}-\ref{regN4}.}
Note that the hyper-singular integral operator of $N$ is given by
\ben
N(\psi)(x)=\begin{bmatrix}
	N_1 & N_2\\
	N_3 & N_4
\end{bmatrix}\begin{bmatrix}
	u\\
	p
\end{bmatrix}(x),\quad x\in\Gamma,\quad \psi=(u^\top,p)^\top,
\enn
where
\ben
N_1(u)(x) &=& \int_\Gamma\left[T(\partial _x,\nu _x)(T(\partial _y,\nu _y)E_{11}(x,y))^\top - i\omega \alpha T(\partial _x,\nu _x)E_{21}(x,y)\nu _y^\top \right]u(y)ds_y\\
&\quad&+\int_\Gamma\left[-\alpha\nu _x(T(\partial _y,\nu _y)E_{12}(x,y))^\top + i\omega\alpha^2E_{22}(x,y)\nu _x\nu _y^\top \right]u(y)ds_y\\
N_2(p)(x) &=& \int_\Gamma\left[-\beta T(\partial _x,\nu _x)E_{11}(x,y)\nu _y + \frac{i\beta}{\rho_f\omega}\partial_{\nu _y}T(\partial _x,\nu _x)E_{21}(x,y) \right]p(y)ds_y\\
&\quad&+\int_\Gamma\left[\alpha\beta\nu_xE_{12}^\top\nu _y-\frac{i\alpha \beta}{\rho_f\omega}\partial_{\nu_y}E_{22}(x,y)\nu_x \right]p(y)ds_y\\
N_3(u)(x) &=& -\int_\Gamma\left[i\omega\beta\nu_x^\top(T(\partial _y,\nu_y)E_{11}(x,y))^\top+\omega^2\alpha\beta\nu_x^\top E_{21}\nu_y^\top \right]u(y)ds_y\\
&\quad&+\int_\Gamma\left[\frac{i\beta}{\rho_f\omega}\partial_{\nu_x}(T(\partial_y,\nu_y)E_{12}(x,y))^\top+\frac{\alpha\beta}{\rho_f}\partial_{\nu_x}E_{22}(x,y)\nu_y^\top\right]u(y)ds_y\\
N_4(p)(x) &=& \int_\Gamma\left[i\omega\beta^2\nu _x^\top E_{11}(x,y)\nu_y + \frac{\beta^2}{\rho_f}\nu _x^\top\partial_{\nu _y}E_{21}(x,y) \right]p(y)ds_y\\
&\quad&-\int_\Gamma\left[\frac{i\beta^2}{\rho_f\omega}\partial_{\nu_x}E_{12}^\top(x,y)\nu_y+\frac{\beta^2}{\rho_f^2\omega^2}\partial_{\nu_x}(\partial_{\nu_y}E_{22}(x,y)) \right]p(y)ds_y.
\enn
We first consider the following term
\ben
T(\partial_x,\nu_x)\int_\Gamma(T(\partial_y,\nu_y)E_{11}(x,y))^\top u(y)ds_y.
\enn
Set
\ben
f_1(x)&=&\int_\Gamma\nabla_y\left[\gamma_{k_s}(x,y)-\gamma_{k_1}(x,y)\right]\nu_y^\top u(y)ds_y,\\
f_2(x)&=&\int_\Gamma\nabla_y\left[\gamma_{k_1}(x,y)-\gamma_{k_2}(x,y)\right]\nu_y^\top u(y)ds_y,\\
f_3(x)&=&\int_\Gamma\partial_{\nu_y}\gamma_{k_s}(x,y) u(y)ds_y,\\
f_4(x)&=&\int_\Gamma\left[2\mu E_{11}(x,y)-\gamma_{k_s}(x,y)I\right]M(\partial_y,\nu_y)u(y)ds_y,
\enn
and
\ben
g_i(x)=\mu\partial_{\nu_x}f_i(x) + (\lambda + \mu )\nu_x\nabla_x\cdot f_i(x) + \mu M(\pa_x,\nu_x)f_i(x).
\enn
Then we obtain from (\ref{eq.tra}) that
\be
g_1(x)&=&(\lambda+2\mu)\int_\Gamma\left[k_s^2\gamma_{k_s}(x,y)-k_1^2\gamma_{k_1}(x,y)\right]\nu_x\nu _y^\top u(y)d{s_y}\nonumber\\
&\quad& + 2\mu \frac{d}{ds_x}\int_\Gamma A\nabla_y\left[\gamma_{k_s}(x,y) - \gamma _{k_1}(x,y)\right]\nu _y^\top u(y)ds_y,
\label{eq.g1}
\en
and
\be
g_2(x)&=&(\lambda+2\mu)\int_\Gamma\left[k_1^2\gamma_{k_1}(x,y)-k_2^2\gamma_{k_2}(x,y)\right]\nu_x\nu _y^\top u(y)ds_y\nonumber\\
&\quad&+2\mu\frac{d}{ds_x}\int_\Gamma A\nabla_y\left[\gamma _{k_1}(x,y)-\gamma_{k_2}(x,y)\right]\nu _y^\top u(y)ds_y.
\label{eq.g2}
\en
It follows from \cite{BXY191} that $g_3(x)$ can be expressed as
\be
g_3(x)&=&\mu\frac{d}{ds_x}\int_\Gamma\gamma_{k_s}(x,y)\frac{du}{ds_y}ds_y + \mu k_s^2\int_\Gamma\gamma_{k_s}(x,y)\nu_x^\top\nu_yu(y)ds_y\nonumber\\
&\quad&+(\lambda+\mu )\int_\Gamma\nu_x\nabla_x^\top\partial_{\nu_y}\gamma _{k_s}(x,y)u(y)ds_y + \mu\frac{d}{ds_x}\int_\Gamma\partial_{\nu _y}\gamma _{k_s}(x,y)u(y)ds_y.
\label{eq.g3}
\en
For $g_4(x)$, we know from (\ref{eq.TEx}) that
\be
g_4(x) &=& \mu\int_\Gamma\nu_x^\top\nabla_x\gamma_{k_s}(x,y) A\frac{du}{ds_y}ds_y - 2\mu\int_\Gamma\nu_x\nabla_x^\top\left[\gamma_{k_s}(x,y) - \gamma_{k_1}(x,y)\right] A\frac{du}{ds_y}ds_y\nonumber\\
&\quad& + \frac{2\mu(k_1^2-q)}{k_1^2-k_2^2}\int_\Gamma\nu_x\nabla_x^T\left[\gamma _{k_1}(x,y) - \gamma_{k_2}(x,y)\right] A\frac{du}{ds_y}ds_y\nonumber\\
&\quad& + 4\mu^2\frac{d}{ds_x}\int_\Gamma AE_{11}(x,y)A \frac{du}{ds_y}ds_y + 3\mu\frac{d}{ds_x}\int_\Gamma\gamma_{k_s}(x,y)\frac{du}{ds_y}ds_y\nonumber\\
&\quad& - (\lambda  + \mu )\int_\Gamma\nu_x\nabla_x^\top\gamma_{k_s}(x,y)A \frac{du}{ds_y}ds_y.
\label{eq.g4}
\en
Therefore, (\ref{eq.g1})-(\ref{eq.g4}) yields
\be
&\quad&T(\partial_x,\nu_x)\int_\Gamma(T(\partial_y,\nu_y)E_{11}(x,y))^\top u(y)ds_y\\ \nonumber
&=& -g_1(x)+\frac{k_2^2-q}{k_1^2-k_2^2}g_2(x)+g_3(x)+g_4(x)\\ \nonumber
&=&  -(\rho-\beta\rho_f)\omega^2\int_\Gamma\gamma_{k_s}(x,y)(\nu_x\nu_y^\top - \nu_x^\top\nu_yI - J_{\nu_x,\nu_y})u(y)ds_y\\ \nonumber
&\quad&+\int_\Gamma\left[\frac{k_1^2(\lambda + 2\mu )(k_1^2 - q)}{k_1^2 - k_2^2}\gamma_{k_1}(x,y) - \frac{k_2^2(\lambda  + 2\mu )(k_2^2 - q)}{k_1^2 - k_2^2}\gamma_{k_2}(x,y)\right] \nu_x\nu _y^\top u(y)ds_y\\ \nonumber
&\quad&+ 4\mu\frac{d}{ds_x}\int_\Gamma\gamma_{k_s}(x,y) \frac{du}{ds_y}ds_y + 4\mu^2\frac{d}{ds_x}\int_\Gamma AE_{11}(x,y)A\frac{du}{ds_y}ds_y\\ \nonumber
&\quad&-2\mu\int_\Gamma\nu_x\nabla_x^\top\left[\gamma_{k_s}(x,y) - \gamma _{k_1}(x,y)\right] A\frac{du}{ds_y}ds_y\\ \nonumber
&\quad& - 2\mu \frac{d}{ds_x}\int_\Gamma A\nabla_y\left[\gamma_{k_s}(x,y) - \gamma _{k_1}(x,y)\right] \nu _y^\top u(y)ds_y\\ \nonumber
&\quad&+\frac{2\mu(k_2^2 - q)}{k_1^2 - k_2^2}\int_\Gamma \nu_x\nabla _x^\top\left[\gamma_{k_1}(x,y) - \gamma_{k_2}(x,y)\right] A\frac{du}{ds_y}ds_y\\ \nonumber
&\quad& + \frac{2\mu(k_2^2 - q)}{k_1^2 - k_2^2}\frac{d}{ds_x}\int_\Gamma  A\nabla _y\left[\gamma_{k_1}(x,y) - \gamma_{k_2}(x,y)\right]\nu_y^\top u(y)ds_y.
\label{eq.N11}
\en
On the other hand, we obtain from (\ref{eq.G2}) that
\be
&\quad&\int_\Gamma T(\partial_x,\nu_x)E_{21}(x,y)\nu_y^\top u(y)ds_y\\ \nonumber
&=& -\frac{\gamma}{k_1^2 - k_2^2}\int_\Gamma\left[k_1^2\gamma_{k_1}(x,y)- k_2^2\gamma_{k_2}(x,y)\right] \nu_x\nu_y^\top u(y)ds_y\\ \nonumber
&\quad& -\frac{2\mu\gamma}{(\lambda  + 2\mu )(k_1^2 - k_2^2)}\frac{d}{ds_x}\int_\Gamma A\nabla_y\left[\gamma_{k_1}(x,y) - \gamma _{k_2}(x,y)\right]\nu_y^\top u(y)ds_y,
\label{eq.N12}
\en
and
\be
&\quad&\int_\Gamma\nu _x(T(\partial_y,\nu_y)E_{12}(x,y))^\top u(y)ds_y\\ \nonumber
&=& -\frac{\gamma}{k_1^2 - k_2^2}\int_\Gamma\left[k_1^2\gamma_{k_1}(x,y) - k_2^2\gamma_{k_2}(x,y)\right]\nu_x\nu_y^\top u(y)ds_y\\ \nonumber
&\quad& -\frac{2\mu\gamma}{(\lambda  + 2\mu )(k_1^2 - k_2^2)}\frac{d}{ds_x}\int_\Gamma\nu_x\nabla_x^\top(\gamma_{k_1}(x,y) - \gamma _{k_2}(x,y)) A\frac{du}{ds_y}ds_y.
\label{eq.N13}
\en
Then Lemma~\ref{regN1} can be proved by combining (\ref{eq.N11})-(\ref{eq.N13}). The proofs of Lemma~\ref{regN2}-\ref{regN4} are analogous, and thus are omitted here.


\begin{thebibliography}{00}
\bibitem{AJKKY18} K. Ando, Y. Ji, H. Kang, K. Kim, S. Yu, Spectral properties of the Neumann-Poincar\'{e} operator and cloaking by anomalous localized resonance for the elasto-static system, Euro. J. Appl. Math. 29 (2018) 189-225.
\bibitem{BXY17} G. Bao, L. Xu, T. Yin, An accurate boundary element method for the exterior elastic scattering problem in two dimensions, J. Comput. Phy. 348 (2017) 343-363.
\bibitem{BXY191} G. Bao, L. Xu, T. Yin, Boundary integral equation methods for the elastic and thermoelastic waves in three dimensions, Comput. Method Appl. Methanics Eng. 354 (2019) 464-486.
\bibitem{BT16} A. Bendalia, S. Tordeux, Extension of the G\"unter derivatives to Lipschitz domains and application to the boundary potentials of elastic waves, arXiv:1611.04362.
\bibitem{BXY192} O.P. Bruno, L. Xu, T. Yin, Weighted integral solvers for elastic scattering by open arcs in two dimensions, arxiv:1902.08687.
\bibitem{BET12} O.P. Bruno, T. Elling, C. Turc, Regularized integral equations and fast high-order solvers for sound-hard acoustic scattering problems, Int. J. Numer. Meth. Eng. 91 (2012) 1045-1072.
\bibitem{BY20} O.P. Bruno, T. Yin, Regularized integral equation methods for elastic scattering problems in three dimensions, J. Comput. Phy. 410 (2020) 109350.
\bibitem{B41} M.A. Biot, General theory of three-dimensional consolidation, J. Appl. Phys. 12 (2) (1941) 155-164.
\bibitem{B55} M.A. Biot, Theory of elasticity and consolidation for a porous anisotropic solid, J. Appl. Phys. 26 (2) (1955) 182-185.
\bibitem{B561} M.A. Biot, Theory of deformation of a porous viscoelastic anisotropic solid, J. Appl. Phys. 27 (5) (1956) 459-467.
\bibitem{B562} M.A. Biot, Theory of propagation of elastic waves in a fluid-saturated porous solid I. Low-frequency range, J. Acoust. Soc. Am. 28 (2) (1956) 168-178.
\bibitem{B563} M.A. Biot, Theory of propagation of elastic waves in a fluid-saturated porous solid II. Higher frequency range, J. Acoust. Soc. Am. 28 (2) (1956) 179-191.
\bibitem{B00}  R. de Boer, Theory of porous media, Springer-Verlag, Berlin, 2000.
\bibitem{CD95} J. Chen, G.F. Dargush, Boundary element method for dynamic poroelastic and thermoelastic analysis, Int. J. Solids Struct. 32 (15) (1995) 2257-2278.
\bibitem{CN02} S. Christiansen, J.C. N\'{e}d\'{e}lec, A preconditioner for the electric field integral equation based on Calder\'{o}n formulas, SIAM J. Numer. Anal. 40 (3) (2002) 1100-1135.
\bibitem{CBB91} A.H.D. Cheng, T. Badmus, D.E. Beskos, Integral equation for dynamic poroelasticity in frequency domain with BEM solution, J. Eng. Mech. Asce.  117(5) (1991) 1136-1157.
\bibitem{CK98} D. Colton, R. Kress, Inverse Acoustic and Electromagnetic Scattering Theory, Springer, Berlin, 1998.
\bibitem{CKM00} R. Chapko, R. Kress, L. M\"onch, On the numerical solution of a hypersingular integral equation for elastic scattering from a planar crack, IMA J. Numer. Anal. 20 (2000) 601-619.
\bibitem{DR93} G. Degrande, G.De Roeck, An absorbing boundary condition for wave propagation in saturated poroelastic media-Part II: Finite element formulation, Soil Dyn. Earthquake Eng. 12 (1993) 423-432.
\bibitem{DE96} S. Diebels, W. Ehlers, Dynamic analysis of a fully saturated porous medium accounting for geometrical and material non-linearities, Int. J. Numer. Methods Eng. 39 (1) (1996) 81-97.
\bibitem{HW08} G.C. Hsiao, W.L. Wendland, Boundary Integral Equations, Applied Mathematical Sciences, Vol.164, Springer-verlag, 2008.
\bibitem{K95} R. Kress, On the numerical solution of a hypersingular integral equation in scattering theory, J. Comput. Appl. Math. 61 (1995) 345-360.
\bibitem{LS98} R.W. Lewis, B.A. Schrefler, The Finite Element Method in the Static and Dynamic Deformation and Consolidation of Porous Media, Wiley, Chichester, 1998.
\bibitem{MB89} G.D. Manolis, D.E. Beskos, Integral formulation and fundamental solutions of dynamic poroelasticity and thermoelasticity, Acta Mech. 76 (12) (1989) 89-104.
\bibitem{MS12} M. Messner, M. Schanz, A symmetric Galerkin boundary element method for 3d linear poroelasticity, Acta Mech. 223 (8) (2012) 1751-1768.
\bibitem{S011} M. Schanz, Application of 3D time domain boundary element formulation to wave propagation in poroelastic solids, Eng. Anal. Bound. Elem. 25 (2001) 363-376.
\bibitem{S012} M. Schanz, Wave propagation in viscoelastic and poroelastic continua a boundary element approach, Lecture notes in applied mechanics, Vol.2, Springer-Verlag, 2001.
\bibitem{SSU09} M. Schanz, O. Steinbach, P. Urthaler, A boundary integral formulation for poroelastic materials, Proc. Appl. Math. Mech. 9 (1) (2009) 595-596.
\bibitem{XOX19} J. Xie, M.Y. Ou, L. Xu, A discontinuous Galerkin method for wave propagation in orthotropic poroelastic media with memory terms, J. Comput. Phys. 397 (2019) 108865.
\bibitem{YHX17} T. Yin, G.C. Hsiao, L. Xu, Boundary integral equation methods for the two dimensional fluid-solid interaction problem, SIAM J. Numer. Anal. 55(5) (2017) 2361-2393.
\end{thebibliography}
\end{document}